\let\oldnl\nl
\newcommand{\nonl}{\renewcommand{\nl}{\let\nl\oldnl}}
\newtheorem{theorem}{Theorem}
\newtheorem{corollary}{Corollary}
\newtheorem{fact}{Fact}
\newtheorem{remark}{Remark}
\newtheorem{proposition}{Proposition}
\def\DEBUG{true}
\title{Fair Clustering with Multiple Colors}
\author{%
Matteo B\"ohm, 
Adriano Fazzone,\\ 
Stefano Leonardi,
Chris Schwiegelshohn\\
  Sapienza University of Rome
}
\begin{document}

\maketitle
	
\begin{abstract}
A fair clustering instance is given a data set $A$ in which every point is assigned some color. Colors correspond to various protected attributes such as sex, ethnicity, or age. A fair clustering is an instance where membership  of points in a cluster is uncorrelated with the coloring of the points.

Of particular interest is the case where all colors are equally represented. If we have exactly two colors, Chierrichetti, Kumar, Lattanzi and Vassilvitskii (NIPS 2017) showed that various $k$-clustering objectives admit a constant factor approximation. Since then, a number of follow up work has attempted to extend this result to a multi-color case, though so far, the only known results either result in no-constant factor approximation, apply only to special clustering objectives such as $k$-center, yield bicrititeria approximations,  or require $k$ to be constant. 

In this paper, we present a simple reduction from unconstrained $k$-clustering to fair $k$-clustering for a large range of clustering objectives including $k$-median, $k$-means, and $k$-center. The reduction loses only a constant factor in the approximation guarantee, marking the first true constant factor approximation for many of these problems.
\end{abstract}

\section*{Acknowledgements}
Partially supported by the ERC Advanced Grant 788893 AMDROMA ``Algorithmic and Mechanism Design Research in Online Markets'' and MIUR PRIN project ALGADIMAR ``Algorithms, Games, and Digital Markets''.

\newcommand{\msdk}{\textsc{MSDk}\xspace}

\section{Introduction}
Clustering is one of the fundamental building blocks of data analysis. 
Due to the enormous amount of attention it has received in research,  classic optimization problems 
such as $k$-means or $k$-median are well (if not completely) understood both in theory and practice.
Nevertheless, there exist a number of open questions. In recent years, the researchers have started 
to address clustering with cardinality constraints. If we require clusters to have a maximum or minimum size, 
these problems tend to become far less tractable.

Cardinality constraints arise naturally in a number of settings including but not limited to privacy preserving clustering, capacitated
clustering, and fairness. In this paper we focus on the latter. Given two distinct populations $A$ and $B$ consisting of $n$ points
each, a clustering is considered to be \emph{fair} if $A$ and $B$ are equally represented in every cluster.
Informally speaking, the separation of the point set into two subsets can be a way of modeling specific binary attributes (e.g., sex,
citizenship) against which an algorithm (or indeed a clustering) should not discriminate.
Our aim is to find a good clustering that obeys such a fairness constraint. 

This (comparatively recent) line of research was initiated
by~\cite{CKLV17} and further developed in subsequent
works~\cite{BIOSVW19,BCN19,BGKKRSS19,RS18,SSS19}.
\cite{CKLV17} affirmatively answered the question of
whether a high quality fair clustering can be obtained efficiently for
two distinct populations. Most follow up work (see e.g.~\cite{BCN19,BGKKRSS19,HuangJV19,RS18,SSS19}) has since considered the question of whether these
guarantees can be extended for multiple populations.  Even though some progress has been made for 
specific clustering problems such as $k$-center or by allowing small fairness violations, the general problem  
remains open despite considerable effort by the research community.

\subsection{Our Contribution}
We settle the problem of the design algorithms with good approximations for fair clustering with multiple populations in the affirmative by showing that for all center based clustering objectives (including $k$-means, $k$-median, $k$-center) and all  metrics. Our main result is as follows.

\begin{theorem}[Informal, see Theorem~\ref{thm:approxclustering}]
Given an $\alpha$-approximation for an unconstrained center-based $k$-clustering problem, there exists an $(\alpha+2)$ approximation for the $k$-clustering problem with fairness constraints.
\end{theorem}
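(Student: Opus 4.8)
The plan is a black-box reduction in two phases: first compute a good colour-oblivious clustering, then \emph{repair} it into a fair one by re-assigning the points to the very same centres. For concreteness I describe it for $k$-median (and note $k$-center), which is where the clean ``$+2$'' falls out; a general center-based objective is handled by using that objective's own (approximate) triangle inequality, which is where the precise constant of Theorem~\ref{thm:approxclustering} comes from.

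\emph{Phase 1.} Strip the colours and run the given $\alpha$-approximation on $A$, obtaining a centre set $\hat C$ with $|\hat C|=k$ and its nearest-centre assignment $\hat\sigma$. Every fair clustering is in particular a feasible unconstrained clustering, so $\mathrm{OPT}_{\mathrm{unc}}\le \mathrm{OPT}_{\mathrm{fair}}$, and hence $\mathrm{cost}(\hat C)\le \alpha\cdot\mathrm{OPT}_{\mathrm{unc}}\le \alpha\cdot\mathrm{OPT}_{\mathrm{fair}}$.

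\emph{Phase 2.} Keep $\hat C$ fixed and compute a \emph{minimum-cost} assignment of $A$ to $\hat C$ subject to the fairness constraint (every cluster receives equally many points of each colour); output it, which is fair by construction. I would argue this subproblem is polynomially solvable: although it superficially looks like a hard $\ell$-dimensional matching problem, there are only $k$ centres, so it is equivalent to choosing a profile $(m_1,\dots,m_k)$ with $\sum_j m_j=n/\ell$ and then solving, independently for each colour, a transportation problem with capacities $m_j$; minimising the (convex, piecewise-linear) sum of these $\ell$ transportation costs over the profile is a polynomially-sized convex program. \textbf{This is the step I expect to be the main obstacle:} the \emph{existence} of a cheap fair assignment to $\hat C$ is an easy triangle-inequality argument (below), but turning it into an \emph{efficient, exact} routine — in particular checking that the profile optimum can be taken integral — is the delicate part, and one may instead want to invoke the fair-assignment subroutine from prior work.

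\emph{Analysis.} It suffices to exhibit \emph{some} fair assignment of $A$ to $\hat C$ of cost at most $(\alpha+2)\,\mathrm{OPT}_{\mathrm{fair}}$, since the computed one is no more expensive. Fix an optimal fair solution with centres $C^*=\{c_1^*,\dots,c_k^*\}$ and fair assignment $\sigma^*$, and for each $c_j^*$ let $\phi(c_j^*)\in\hat C$ be a nearest centre; set $\tilde\sigma:=\phi\circ\sigma^*$. Then $\tilde\sigma$ is fair, because each of its clusters is a union of clusters of $\sigma^*$, every one of which is colour-balanced. For the cost, take any point $p$ and write $c^*=\sigma^*(p)$; using $d(c^*,\phi(c^*))=\min_{\hat c\in\hat C}d(c^*,\hat c)\le d(c^*,\hat\sigma(p))$ together with the triangle inequality,
\[
 d(p,\tilde\sigma(p))\ \le\ d(p,c^*)+d(c^*,\phi(c^*))\ \le\ d(p,c^*)+d(c^*,p)+d(p,\hat\sigma(p))\ =\ 2\,d(p,c^*)+d(p,\hat\sigma(p)).
\]
Summing over all $p$ (for $k$-median) or taking the maximum over all $p$ (for $k$-center) gives $\mathrm{cost}(\tilde\sigma)\le 2\,\mathrm{OPT}_{\mathrm{fair}}+\mathrm{cost}(\hat C)\le(\alpha+2)\,\mathrm{OPT}_{\mathrm{fair}}$, which bounds the cost of the clustering returned in Phase 2 and completes the proof sketch.
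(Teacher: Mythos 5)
Your existence argument is sound, and it is essentially the same two-applications-of-the-triangle-inequality calculation the paper uses to extract the $+2$: map each optimal fair center to its nearest computed center, observe that merging balanced clusters preserves balance, and pay $2\,\mathrm{OPT}_{\mathrm{fair}}+\mathrm{cost}(\hat C)$. The genuine gap is Phase~2, and it is fatal as stated: computing a minimum-cost \emph{fair} assignment of the points to a fixed center set $\hat C$ is NP-hard for $\ell\geq 3$ colors. The paper proves exactly this via a reduction from $3$-dimensional matching (it states, as a proposition, that given a candidate center set $C$, deciding whether a fair $k$-center assignment of cost in the range $[t,3t]$ exists is NP-hard, and the gap instance behind Proposition~\ref{prop:hardness} gives the analogous APX-hardness for the sum objective). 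Your convex-programming escape route does not repair this: once the profile $(m_1,\dots,m_k)$ is fixed the colors indeed decouple into independent transportation problems, but the optimization \emph{over integral profiles} is precisely where the $3$-dimensional-matching hardness lives. Each $T_i(m)$ is a polyhedral convex value function and their sum is convex, yet the continuous minimizer of $\sum_i T_i(m)$ need not be integral for $\ell\geq 3$ and no rounding is available; for $\ell=2$ the joint assignment LP is a flow problem and is integral, which is exactly why the ``fair-assignment subroutine from prior work'' you hope to invoke exists only for two colors. You correctly flagged this step as the main obstacle, but the obstacle is not a technicality --- it is the reason the multi-color problem stayed open.

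The paper circumvents the issue by never solving a fair assignment to arbitrary centers. It first builds a fairlet structure: it computes min-cost perfect matchings between every pair of color classes (two-set problems, hence polynomial), shows by a best-color/triangle-inequality argument that some reference color $A^{(i)}$, duplicated $\ell$ times, is a $2$-approximate fair $n$-clustering, runs the unconstrained $\alpha$-approximation on $A^{(i)}$ alone, and then assigns every point of color $j$ to the center received by its matching partner in $A^{(i)}$. Fairness is automatic because the matchings are bijections, so each center receives equally many points of every color by construction, and the cost bound is the same $2\,\mathrm{OPT}+\alpha\,\mathrm{OPT}$ you derive. To salvage your two-phase plan you would have to replace ``minimum-cost fair assignment to $\hat C$'' by an assignment routed through such pairwise matchings to a single well-chosen reference color --- at which point you have rederived the paper's algorithm.
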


Given the large number of good approximation algorithms known for the $k$-clustering problems, our result can be widely applied in most practical setting of interest for fairness applications.  

Previous results either only applied to $k$-center~\cite{BGKKRSS19,RS18}, yielded bicriteria
approximations~\cite{BCN19,BGKKRSS19}, or required $k$  to be constant~\cite{SSS19}.
The general algorithm we propose is quite simple, in contrast to earlier work requiring geometric decompositions~\cite{SSS19,HuangJV19} or rounding linear programs~\cite{BCN19,BGKKRSS19}.  


The caveat of our approach is that it requires solving multiple instances of a transportation problem. We remedy this by giving algorithms for $k$-center and $k$-median that run in linear time. Specifically, we show that for $k$-center, there exists a simple greedy heuristic that induces a
$3$-approximation. For $k$-median, we extend the linear time $O(d\log n)$-approximation by~\cite{BIOSVW19} to multiple
attributes.

Lastly, we also consider hardness of approximation for these problems. If the number of centers is constant and the points lie in Euclidean space, we show that a polynomial time approximation exists. This is complemented with the following hardness proof. Given three point sets consisting of exactly $n$ points, finding a fair $n$-median or $n$-center clustering is APX hard. This already shows that considering fair clustering with at least $3$ populations is harder than the same problem with only $2$ populations.

\subsection{Related work}
\label{sec:related}

\paragraph{Algorithmic Fairness and Fair Clustering}
Fairness in algorithms has recently received considerable attention, see
\cite{HPNS16,TRT11,ZVG17,ZVGGW17c} and references therein.
The idea of clustering using balancing constraints is derived from notion of disparate impact.
Disparate impact was first proposed by~\cite{FFMSV15}. Despite some impossibility results in certain settings~\cite{Chouldechova17,KleinbergMR17}, it has been used successfully combined for classification~\cite{GordalizaBGL19,NBGP18,ZVGG17}, ranking~\cite{CHV18,CSV18}, regression~\cite{AgarwalDW19}, graph embeddings~\cite{BoseH19} and indeed clustering.
Its application to clustering was initiated by~\cite{CKLV17}.
They showed that for two protected classes, fair clustering for various objectives such as $k$-median, $k$-center, and (implicitly, though unstated) $k$-means can be approximated as well as the unconstrained variants of the problem (up to constant factors).
Building upon their work,~\cite{BIOSVW19},~\cite{SSS19}, and~\cite{HuangJV19} considered this problem for large data sets.
The main open problem left in their work is whether the approximability
can be extended for multiple color classes. Here, the $k$-center problem
has received the most attention~\cite{BCN19,BGKKRSS19, RS18}, with the current state of the art being a $5$-approximation or a bicretieria $4$-approximation that violates the fairness constraint by a small amount. 
Conversely, prior to our work, for $k$-means with multiple protected classes, only a PTAS for constant $k$~\cite{SSS19,HuangJV19} in Euclidean spaces of constant dimension and
bicriteria approximation algorithms were known~\cite{BCN19, BGKKRSS19}.

We note that there exist other models combining fairness and clustering objectives. 
Disparity of impact for spectral clustering has been studied by~\cite{KleindessnerSAM19}. Further spectral algorithms with fairness considerations appear in~\cite{MSSTV19, SamadiTMSV18,TantipongpipatS19}.
\cite{KAM19} considered $k$-centers under the fairness constraint that the set of centers, rather than the composition of the clusters. 

\paragraph{Cardinality Constrained Clustering}
As mentioned above, fair clustering is a special case of clustering with cardinality constraints. When the cardinalities are bounded from above, this is known as capacitated clustering. For $k$-center, constant factor approximation algorithms are known~\cite{AnBCGMS15,CyganHK12}. For $k$-median no constant factor approximation is known, though there are several bicriteria approximations~\cite{AdamczykBMM019,DemirciL16,Li17}. 
\cite{APFTKKZ10} and~\cite{RS18} consider a variant of $k$-center where the cluster sizes are lower bounded, which models privacy preserving clustering. For this problem, they obtain a $12$-approximation. 
A similar notion by~\cite{ChenFLM19} phrases fairness in terms of proportionality of a given solution.
\cite{LiYZ10} also considered a $k$-center problem with a diversity constraint. Here, we are given $\ell$ color classes and every cluster must contain at least one point from every class. For this problem, they achieved a $2$-approximation.
For arbitrary constraints on cluster sizes,~\cite{BhattacharyaJK18,DingL18,DingX15} obtain polynomial time approximation schemes, given constant $k$.

\subsection{Preliminaries and Problem Definition}\label{se:prelims}

Throughout this paper, let $[n]$ be the set of natural numbers from $1$ to $n$.
A $k$-clustering of an $n$-point set $A$ is a disjoint partition of $A$ into $k$ subsets $C_1,\ldots, C_k$ called clusters.
Further, we are given a coloring $c:[n] \rightarrow [\ell]$.
A set $S\subset [n]$ is called
\emph{balanced} if $|S\cap \{v\in V~|~c(v)=i\}| = |S\cap \{v\in
V~|~c(v)=j\}|$ for any $i,j\in [l]$. 
A clustering is called {\em balanced} or {\em fair} if every cluster is balanced.
We view our input point set as
a matrix $A$, where the rows of $A$ correspond to points and the columns denote features.
If the input point set is balanced, we say that $A$ is an $\ell\cdot n \times d$ matrix and use a coloring $c:[\ell\cdot n]\rightarrow [\ell]$. We further use $A^{(i)}$ to denote the set $\{p\in A~|~c(p)=i\}$.

The $\ell_p$ norm of a $d$-dimensional vector $x$ is denoted $\|x\|_p = \sqrt[p]{\sum_{i=1}^d |x_i|^p}$. Taking the limit of $p\rightarrow \infty$, we define $\|x\|_{\infty} = \max_{i=1}^d x_i$.
For a matrix $A$, define the \emph{cascaded} $(p,q)$ norm \newline
$\|A\|_{p,q} := \left(\sum_{i=1}^n \left(\sum_{j=1}^d |A_{i,j}|^q\right)^{p/q}\right)^{1/p}$,
that is, we first compute the
$q$-norms of the rows of $A$, and then compute the $p$ norm of the
resulting vector. It is perhaps instructive to note that $\|A\|_{2,2}$
is identical to the Frobenius norm of $A$.

The $(k,p,q)$ clustering
problem consists of computing an $n\times d$ matrix $C$ with at most $k$
distinct rows minimizing $\|A-C\|_{p,q}$. 
The cluster induced by the row $C_j$ are the points $\{A_i~|~ i\in [n] \wedge C_i=C_j\}$.
For example, $(k,1,1)$
clustering is the $k$-median problem in Hamming space, and $(k,\infty,2)$ clustering is
Euclidean $k$-center.
The fair-$(k,p,q)$ clustering problem further requires that every cluster induced by $C$ is balanced.

Finally, given two $n$ point sets $A^{(i)}$ and $A^{(j)}$, a matching is a bijection $\pi: A^{(i)}\rightarrow A^{(j)}$. Given some matching $\pi$, we say that the $(p,q)$-cost is $\sqrt[p]{\sum_{x\in A^{(i)}} \|x-\pi(x)\|_q^{p}}$. The optimal matching with respect to the $(p,q)$-cost is called the min-cost perfect $(p,q)$-matching, or simply min-cost perfect matching if $p$ and $q$ are clear from the context. 
In literature, this is sometimes referred to as the Earth-Mover's distance between $A^{(i)}$ and $A^{(j)}$, for which we use the shorthand $EMD_{p,q}(A^{(i)},A^{(j)})$.
The time required to compute an optimal min-cost perfect matching on $n$-point sets is denoted by $MCPM(n)$\footnote{There exist algorithms that run faster in special cases, such as $(1,2)$-matching in low-dimensional Euclidean space. For a single algorithm that solves the problem for all $p$ and $q$, we refer the reader as an example to the Hungarian algorithm~\cite{Kuhn55}.}.

\section{Approximate Fair $k$-Clustering}
We start with our main result for $(k,p,q)$-clustering objectives such as $k$-median and $k$-center.

\begin{theorem}
\label{thm:approxclustering}
Let $A$ be an $\ell\cdot n\times d$ matrix, let $c:[\ell\cdot n]\rightarrow [\ell]$ be a balanced coloring of $A$, and let $k$ be an integer.
Given an $\alpha$-approximation of the unconstrained
$(k,p,q)$-clustering problem running in time $T(n,d,k,p,q)$, there
exists an $(\alpha + 2)$
approximation for the fair $(k,p,q)$-clustering problem with $\ell$ colors running in time $O(\ell^2\cdot MCPM(n)\cdot T(n,d,k,p,q))$. 
\end{theorem}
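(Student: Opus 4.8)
The plan is to reduce fair $(k,p,q)$-clustering to the unconstrained problem by working on a single "representative" color class and then transporting the resulting clustering to the other classes via min-cost matchings. First I would argue about structure: in any fair clustering, each cluster contains exactly the same number of points of every color, so a fair clustering of $A$ is equivalent to a choice of $k$ "super-clusters," each of which is a union of $\ell$ equal-size color-blocks, together with matchings that say which point of color $i$ is grouped with which point of color $j$. The key idea is that, once we fix $\ell$ matchings $\pi_{1,i}: A^{(1)} \to A^{(i)}$ (with $\pi_{1,1}$ the identity), every point $x \in A^{(1)}$ together with its partners $\pi_{1,i}(x)$ forms a "virtual point"; the cost of assigning this virtual group to a center $c$ is $\big(\sum_{i=1}^{\ell} \|\pi_{1,i}(x) - c\|_q^p\big)^{1/p}$, and clustering the virtual points optimally is just an unconstrained $(k,p,q)$-clustering instance (on a derived point set, or equivalently a weighted instance) to which we can apply the given $\alpha$-approximation.

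Concretely, the algorithm is: for each ordered pair of colors $(i,j)$ compute a min-cost perfect $(p,q)$-matching between $A^{(i)}$ and $A^{(j)}$; use these to build, for each choice of "anchor" color, a set of virtual points; run the $\alpha$-approximation on each; and return the cheapest fair clustering obtained by un-folding the virtual clustering back to $A$. This explains the running time: $O(\ell^2)$ matchings of cost $MCPM(n)$ each, and $O(\ell^2)$ (or fewer, depending on how anchors are enumerated) invocations of the $\alpha$-approximation, giving $O(\ell^2 \cdot MCPM(n) \cdot T(n,d,k,p,q))$.

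For the approximation bound, let $OPT$ denote an optimal fair clustering with cost $\mathrm{fair\text{-}OPT}$. The optimal fair clustering induces, for every pair of colors, a perfect matching between the color classes (match points in the same cluster arbitrarily but consistently). I would fix the anchor to be, say, the color whose contribution to $\mathrm{fair\text{-}OPT}$ is smallest, or more robustly just analyze every anchor and take the best. Using the triangle inequality in the $(p,q)$ metric — first comparing each point of color $i$ to its $OPT$-partner of the anchor color, then to the center — one shows that the virtual-point instance built from the min-cost matchings has an unconstrained optimum of cost at most $(1 + \text{const}) \cdot \mathrm{fair\text{-}OPT}$: the min-cost matchings are no worse than the $OPT$-induced matchings, and re-centering each virtual group at the anchor point of $OPT$'s cluster loses only a bounded factor. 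Applying the $\alpha$-approximation and then unfolding, a second application of the triangle inequality (the unfolded clustering assigns each color-$i$ point to the center chosen for its virtual group, paying the matching cost plus the virtual-clustering cost) yields the claimed $(\alpha+2)$ bound.

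The main obstacle I anticipate is handling the triangle-inequality bookkeeping cleanly for general $(p,q)$ norms — in particular, making sure the "$+2$" is tight rather than "$+O(\ell)$." The delicate point is that when we move from $OPT$'s center to an anchor point of $OPT$'s cluster and back, we want each detour charged only once against $\mathrm{fair\text{-}OPT}$, not once per color; this should work because the min-cost matching lets us bound the total transportation cost across all colors by $\mathrm{fair\text{-}OPT}$ itself (each color-$i$ point is within its $OPT$-cluster-cost of the anchor point, summed appropriately), so the two triangle-inequality applications — one to build the virtual instance, one to unfold — each contribute at most $1 \cdot \mathrm{fair\text{-}OPT}$, on top of the $\alpha \cdot \mathrm{fair\text{-}OPT}$ from the approximate clustering. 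I would also need to verify the min-cost matching inequality respects the $p$-th-power-sum structure, i.e. that $EMD_{p,q}$ genuinely lower-bounds the matching cost extracted from any fair clustering, which is immediate from optimality of the min-cost matching.
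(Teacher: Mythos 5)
Your high-level plan (an anchor color, pairwise min-cost matchings, two applications of the triangle inequality) is close in spirit to the paper's, but the step where you hand the ``virtual point'' instance to the given $\alpha$-approximation is where the argument breaks, in two ways. First, the grouped instance whose per-group cost at a center $c$ is $\bigl(\sum_{i}\|\pi_{1,i}(x)-c\|_q^p\bigr)^{1/p}$ is not an unconstrained $(k,p,q)$-clustering instance on any derived or weighted point set: for general $p,q$ there is no single weighted point whose distance to every candidate center reproduces this group cost (centroid substitution works for $k$-means but not, say, for $k$-median), so the assumed black box cannot be invoked on it as stated. Second, and more fundamentally, your bookkeeping lets the matching error multiply $\alpha$ instead of adding to it. The optimum of the virtual instance built from the \emph{min-cost} matchings is not bounded by the optimal fair cost $OPT_k$: the pairwise min-cost matchings need not agree with the grouping induced by the optimal fair clustering (for $\ell\ge 3$ the optimal grouping is even APX-hard to find, cf.\ Proposition~\ref{prop:hardness}), and the triangle inequality only bounds that optimum by roughly $3\cdot OPT_k$ (transport plus anchor cost). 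An $\alpha$-approximation of that instance then costs up to $3\alpha\cdot OPT_k$, not $\alpha\cdot OPT_k$, so the total does not come out to $(\alpha+2)\cdot OPT_k$.

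The paper avoids both problems by never clustering the groups. It runs the black box on a single color class $A^{(i)}$ by itself, which is a genuine unconstrained instance on $n$ points. An averaging argument over colors shows that for some color the matrix of $\ell$ copies of $A^{(i)}$ is within $OPT_k$ of the replicated optimal centers $C(\mathcal{L})$; hence the unconstrained optimum of that color is at most $OPT_k$ and the black box contributes only $\alpha\cdot OPT_k$. The transport term, bounded by $2\,OPT_k$ via one triangle inequality through $C(\mathcal{L})$, is then paid once, additively, by a single final triangle inequality when the computed centers are extended to the other colors through the matchings. You already have the key ingredient (your ``color whose contribution to the optimum is smallest'' is exactly the averaging argument); the fix is to apply the $\alpha$-approximation to the anchor color alone and charge the matchings only in the additive term.
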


\begin{algorithm}
\caption{Fair to Unfair Reduction for $(k,p,q)$-Clustering}
\begin{algorithmic}
\REQUIRE{Balanced point set $A=\biguplus_{i=1}^{\ell} A^{(i)}$ with $|A^{(i)}|=n$, $k\in \mathbb{N}$, $p,q\in \mathbb{N}$\\
}
\FOR{$i\in\{1,\ldots,\ell\}$}
	\FOR{$j\in\{1,\ldots,\ell\}$}
		\STATE{compute min-cost perfect $(p,q)$ matching $\pi_{i,j}$ between $A^{(i)}$ and $A^{(j)}$\\}
	\ENDFOR
	\STATE{Run unconstrained $(p,q)$-clustering algorithm on $A^{(i)}$ with centers $C^{i}=\{c_1^{i},\ldots c_k^{i}\}$}
\FOR{$j=\{1,\ldots ,\ell\}$}
	\FOR{$x\in A^{(j)}$}
		\STATE{Assign $x$ to $\underset{c\in C}{\text{argmin}} \|\pi_{i,j}(x)-c\|_q$	
}
	\ENDFOR
\ENDFOR
\ENDFOR
\STATE{Output the best fair clustering among all $C^{i}$\\}
\end{algorithmic}
\label{alg:reduction}
\end{algorithm}

We give in Algorithm ~\ref{alg:reduction} the pseudocode of the algorithm used to prove this theorem.
The high level ideas are as follows. Our algorithm is based on the computation of a solution 
for fair $n$-clustering initially proposed by~\cite{CKLV17}. We show that a $2$-approximate fair $n$-clustering can always be obtained
by selecting the input color $A^{(i)}$ such that $\sqrt[p]{\sum_{j=1}^{\ell} \left(EMD_{p,q}(A^{(i)},A^{(j)})\right)^p}$
is minimized. Run an algorithm for unconstrained $(k,p,q)$-clustering on $A^{(i)}$ then allows to recover a fair $k$-clustering of $A$.

What is left to show for the analysis is that clustering (multiplicities of) $A^{(i)}$ can be related to clustering the entire point set. The total cost of the final solution is actually bounded by $2$ times the cost of the  fair $n$-clustering plus the cost of an $\alpha$-approximation algorithm for $(k,p,q)$-clustering.  This allows to recover an $(\alpha+4)$-approximate fair clustering.
To obtain an $(\alpha+2)$-approximation, additional care is needed. Instead of separately bounding $\sqrt[p]{\sum_{j=1}^{\ell}
\left(EMD_{p,q}(A^{(i)},A^{(j)})\right)^p}$ and the cost of clustering $A^{(i)}$, we show that there always exists a color such that the sum of both is small. While the proof is non-constructive, we can recover an appropriate solution by running the approximation algorithm on all $A^{(i)}$, choosing the assignment by pairwise minimization of $EMD_{p,q}(A^{(i)},A^{(j)})$, and finally selecting the cheapest resulting fair clustering.


\begin{proof}


We denote by $\mathcal{L}=\{L_1,\ldots ,L_k\}$ the clusters of an optimal balanced $(k,p,q)$-clustering and $\mathcal{C}=\{c_1,\ldots ,c_k\}$ the associated centers. We denote by $OPT_k$ the cost of an optimal balanced $(k,p,q)$-clustering.
We also observe that the existence of balanced $k$-clustering always implies the existence of a balanced $n$-clustering 

For the analysis of the algorithm, it is useful to rearrange the rows of $A$ in a way that they are sorted by cluster membership in the optimal balanced $(k,p,q)$-clustering. First, we partition $A$ into $\ell$ blocks consisting of $n$ rows, where each block represents a color. In each block, the first $|L_1|$ rows are the points in cluster $L_1$ (in arbitrary order), the next $|L_2|$ rows are the points in cluster $L_2$, and so on. 

We now define an auxiliary matrix containing the information on  the centers  of the optimal balanced $(k,p,q)$-clustering.  Let $C(L)$ denote the $n\times d$ matrix such that the row $C(L)_{i}$ contains the center of cluster $L_i$. 
Finally let $C(\mathcal{L})$ be 
$\ell\cdot n\times d$ matrix obtained by  the union of $\ell$ copies of
$C(L)$, that is, $C(\mathcal{L})^T = [ \overbrace{C(L)^T ~ C(L)^T ~\ldots ~ C(L)^T}^{\ell \text{ copies}}]$.

We then have
$\displaystyle OPT_k  = \|A-C(\mathcal{L})\|_{p,q}$ \newline $\displaystyle=\left(\sum_{i=1}^n \sum_{j=1}^{\ell} 
\left(\|A_{n\cdot(j-1)+i} -C(L_i)\|_q\right)^p\right)^{1/p}. $
Next, denote by $A^{(j)}$ the $n\times d$ matrix whose rows are all 
points of color $j$ (i.e., $A^{(j)}$ corresponds to rows 
$A_{n\cdot(j-1)+1},\ldots , A_{n\cdot j}$ of $A$).
Now consider the  $n$-clustering of minimum cost obtained by using $n$
points of the same color, that is, the matrix minimizing the expression
\begin{eqnarray*}
&
\left\|\begin{bmatrix}
A^{(1)} \\
A^{(2)} \\
\vdots \\
A^{(\ell)}
\end{bmatrix} - \begin{bmatrix}
A^{(i)} \\
A^{(i)} \\
\vdots \\
A^{(i)}
\end{bmatrix}\right\|_{p,q} \leq 
\left\|\begin{bmatrix}
A^{(1)} \\
A^{(2)} \\
\vdots \\
A^{(\ell)}
\end{bmatrix} - \begin{bmatrix}
C(L) \\
C(L) \\
\vdots \\
C(L)
\end{bmatrix}\right\|_{p,q} + 
\left\|\begin{bmatrix}
A^{(i)} \\
A^{(i)} \\
\vdots \\
A^{(i)}
\end{bmatrix} - \begin{bmatrix}
C(L) \\
C(L) \\
\vdots \\
C(L)
\end{bmatrix}\right\|_{p,q}
,&
\end{eqnarray*}
where inequality follows by application of triangle inequality, for 
every possible value of $i$.
Note that the first summand is always equal to $OPT_k$
and therefore independent of the minimization with respect to $i$.
For the second summand, we observe that the cheapest color $A^{(j)}$ always satisfies 
\begin{equation}
\left\|\begin{bmatrix}
A^{(j)} \\
A^{(j)} \\
\vdots \\
A^{(j)}
\end{bmatrix} - \begin{bmatrix}
C(L) \\
C(L) \\
\vdots \\
C(L)
\end{bmatrix}\right\|_{p,q}\leq OPT_{k}.
\label{eq:thm2_1}
\end{equation}
Therefore, the minimizer $A^{(i)}$ (which is not necessarily equal to
    $A^{(j)}$) wrt the above expression has cost at most
$2\cdot OPT_k$.
Moreover, the very same definition of 
the optimal $A^{(i)}$ (left hand side of the inequality above) suggests 
that we can compute $A^{(i)}$ by simply evaluating a 
min-cost perfect $(p,q)$-matching between all pairs $A^{(j)}$ and $A^{(j')}$ for $j,j'\in \ell$ and aggregating these costs by taking the $p$-norm of the pairwise Earth-Mover's distances.  To see this, observe that the matrix 
\begin{eqnarray*}
& \left\|\begin{bmatrix}
A^{(1)} \\
A^{(2)} \\
\vdots \\
A^{(\ell)}
\end{bmatrix} - \begin{bmatrix}
A^{(i)} \\
A^{(i)} \\
\vdots \\
A^{(i)}
\end{bmatrix}\right\|_{p,q} & 
\end{eqnarray*}
induces a matching between $A^{(i)}$ and $A^{(j)}$, for all $j\in [\ell]$, using the bijection that maps $A^{(i)}_t$ to $A^{(j)}_t$. Then
\begin{eqnarray*}
(2\cdot OPT_k)^p &\geq &\sum_{j=1}^{\ell} \sum_{t=1}^n \|A^{(i)}_t - A^{(j)}_t\|_q^p \geq \sum_{j=1}^{\ell} EMD_{p,q}\left(A^{(i)},A^{(j)}\right)^p.
\end{eqnarray*}

Thus we can conclude that the clustering to the color $i$ minimizing the aggregated cost of Earth Mover's distances satisfies
\begin{equation}
\sqrt[p]{\sum_{j=1}^{\ell} EMD_{p,q}\left(A^{(i)},A^{(j)}\right)^p}\leq 2\cdot OPT_{k}.
\label{eq:thm2_2}
\end{equation}

Now, assuming that $A^{(i)}$ satisfies Equations~\ref{eq:thm2_1} and~\ref{eq:thm2_2} (an assumption we will remove in a moment), we run an $\alpha$-approximate algorithm for unconstrained $(k,p,q)$-clustering on $A^{(i)}$. Let $\mathcal{Z}=\{z_1,\ldots ,z_k\}$ be the set of centers computed by the approximation.

The fair assignment is obtained by mapping the points of the remaining
colors $A^{(j)}$ to the center to which their matching partner in
$A^{(i)}$ was assigned, that is, $A^{(j)}_t$ gets mapped to
$\underset{z\in \mathcal{Z}}{\text{argmin}} \|A^{(i)}_t-z\}$. 
Denote the resulting center matrix $Z$, that is, the rows of $Z$ contain the center to which the corresponding rows of $A^{(1)},\ldots, A^{(\ell)}$ were assigned. 
We then have, after suitably rearranging rows of $A$
\begin{eqnarray}
\nonumber
&\left\|\begin{bmatrix}
A^{(1)} \\
A^{(2)} \\
\vdots \\
A^{(\ell)}
\end{bmatrix} - \begin{bmatrix}
Z \\
Z \\
\vdots \\
Z
\end{bmatrix}\right\|_{p,q} 
\leq 
\left\|\begin{bmatrix}
A^{(1)} \\
A^{(2)} \\
\vdots \\
A^{(\ell)}
\end{bmatrix} - \begin{bmatrix}
A^{(i)} \\
A^{(i)} \\
\vdots \\
A^{(i)}
\end{bmatrix}\right\|_{p,q} + 
\left\|\begin{bmatrix}
A^{(i)} \\
A^{(i)} \\
\vdots \\
A^{(i)}
\end{bmatrix} - \begin{bmatrix}
Z \\
Z \\
\vdots \\
Z
\end{bmatrix}\right\|_{p,q}
&\\
&\leq  2\cdot OPT_{k} + \alpha\cdot OPT_k \leq (\alpha + 2)OPT_k&,
\label{eq:thm2_3}
\end{eqnarray}
where the first inequality follows from the triangle inequality and the second inequality holds by the assumption that $A^{(i)}$ satisfies Equations~\ref{eq:thm2_1} and~\ref{eq:thm2_2}.

To remove the assumption on $A^{(i)}$, we simply run an $\alpha$-approximation for all colors and compute the assignment in the aforementioned way, outputting the cheapest one at the end. The cheapest one is guaranteed to satisfy Equation~\ref{eq:thm2_3}.
\end{proof}

\begin{remark}
A minor modification to this proof is necessary when considering the $k$-center objective, i.e., 
the objective is to compute a $(k,\infty,q)$-clustering. In this case, one has to manipulate the norms with some additional care, as some operations such as taking the $p$-th power are no longer well-defined when taking the limit $p\rightarrow \infty$.
However, in many ways the proof becomes easier, while retaining the same line of reasoning.
Since previous papers already published proofs for $k$-center clustering with similar approximation ratios and we will present in the next section a substantially simpler and faster algorithm, we omit details.

We also remark that for fair $k$-means in Euclidean spaces, which corresponds to finding a matrix $C$ with $k$ distinct rows such that $\|A-C\|_{2,2}^2$ is minimized. Using standard manipulations found in \cite{BBCGS19,FSS13}, one can derive an approximation ratio of $(1+\varepsilon)\alpha + 2(1+\frac{1}{\varepsilon})$ for any $\varepsilon>0$ with the analysis of Theorem~\ref{thm:approxclustering}.
\end{remark}

While Theorem~\ref{thm:approxclustering}, as stated, only applies to clustering in an $\ell_q$ space, it immediately generalizes to arbitrary metrics as well, with minor modifications.
This follows from the fact that any finite $n$ point metric can be isometrically embedded into $\|\cdot\|_\infty$ 
with ${O(n\log n)}$ dimensions; see, for instance,
Matousek~\cite{Mat96}. In other words, the problem of computing a fair $k$-clustering in an arbitrary metric space can always be reduced to computing a fair $k$-clustering in $\ell_{\infty}$ space, i.e. fair $(k,p,\infty)$ clustering.
We summarize this in the following corollary.

\begin{corollary}
Suppose we are given a balanced $n$ point set in some finite metric space. Then there exists a polynomial time $(\alpha+2)$ approximation algorithm for fair $k$-clustering.
\end{corollary}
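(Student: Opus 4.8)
The plan is to reduce fair $k$-clustering in a finite metric space to fair $(k,p,\infty)$-clustering in a normed space and then invoke Theorem~\ref{thm:approxclustering}. First I would recall the embedding theorem of Matousek~\cite{Mat96} (the classical Fr\'echet-type embedding already suffices): an $m$-point metric space embeds isometrically into $\ell_\infty^{D}$ with $D=O(m\log m)$, and such an embedding $\phi$ is computable in polynomial time. Applied to the input set $A$, this produces $\phi(A)\subset \ell_\infty^{D}$ with $\|\phi(x)-\phi(y)\|_\infty = d(x,y)$ for all $x,y\in A$; the coloring is inherited unchanged, so $\phi(A)$ is again balanced.

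Next I would observe that, because $\phi$ is an isometry, a partition of $A$ together with a choice of one center per cluster has exactly the same clustering cost as the corresponding partition of $\phi(A)$, and the balancedness constraint is literally the same; hence fair $k$-clustering on $A$ and fair $(k,p,\infty)$-clustering on $\phi(A)$ have equal optimum $OPT_k$, and a solution to one yields a solution to the other of the same cost. The only point needing care is the meaning of ``center'': Theorem~\ref{thm:approxclustering} permits centers to be arbitrary points of the ambient $\ell_\infty^{D}$, whereas a bare metric only offers its input points. I would resolve this by instantiating the unconstrained subroutine inside Algorithm~\ref{alg:reduction} with a polynomial-time (discrete) metric $(k,p,\infty)$-clustering algorithm whose centers are input points; then every distance evaluated by Algorithm~\ref{alg:reduction}---the matchings $\pi_{i,j}$, the nearest-center assignments $\text{argmin}_{c}\|\pi_{i,j}(x)-c\|_\infty$, and the reported cost---coincides with a distance in the original metric, so the analysis of Theorem~\ref{thm:approxclustering} applies verbatim and returns a fair clustering of cost at most $(\alpha+2)\,OPT_k$.

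For the running time, $D$ is polynomial in the number of input points, $MCPM$ is polynomial (e.g.\ via the Hungarian algorithm), the number of colors $\ell$ is at most the number of points, and the metric clustering subroutine runs in polynomial time for the objectives of interest; all factors in the bound $O(\ell^2\cdot MCPM(n)\cdot T(n,D,k,p,\infty))$ of Theorem~\ref{thm:approxclustering} are therefore polynomial in the input size, which gives the claim. The main obstacle is the second step: pinning down the exact clustering model in the metric setting---in particular, reconciling the ``arbitrary ambient center'' convention of Theorem~\ref{thm:approxclustering} with the ``input-point center'' convention natural to a bare metric---so that the isometric embedding genuinely identifies the two problems. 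Once the convention is fixed (restrict to discrete centers, and note that every center produced by the reduction then lies in $\phi(A)$), the remainder is immediate from the results already established.
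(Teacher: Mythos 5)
Your proposal is correct and follows exactly the route the paper takes: isometrically embed the finite metric into $\ell_\infty$ via Matousek's $O(n\log n)$-dimensional embedding and then invoke Theorem~\ref{thm:approxclustering} for fair $(k,p,\infty)$-clustering. Your additional care about reconciling the ambient-center versus input-point-center conventions is a detail the paper subsumes under ``minor modifications,'' but it does not change the argument.
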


\section{Faster Algorithms for Fair $k$-Median and $k$-Center}
\label{sec:fast}

\paragraph{\bf Fair $k$-Median}

For fair $k$-median, we  obtain an $(\alpha+2)$-approximation as in Theorem~\ref{thm:approxclustering}, albeit in a substantially faster running time. As mentioned in~\cite{BIOSVW19}, computing min-cost perfect matchings is expensive and tends to dominate the running time of fair clustering. In their paper, they proposed an algorithm that computes an $O(d\log n)$-approximate fairlet decomposition for fair $k$-median in nearly linear time\footnote{The dependency on $d$ may be further reduced to $O(\log k)$ using dimension reduction techniques from~\cite{MakarychevMR19}}. 
The result from Theorem~\ref{thm:approxclustering} can be combined with this approach, yielding an $O(d\log n)$-approximation in time $\tilde{O}(\ell^2 d n)$ time. In this paper we briefly illustrate how to obtain a linear time randomized  algorithm (i.e running in time $\tilde{O}(\ell d n)$).

%

Recall that a fairlet, as defined by~\cite{CKLV17} is a $k'$-clustering with possibly more than $k$ centers, for which a single point is used as a representative. Clustering the representatives and merging the fairlets then results in a fair clustering, for any value of $k$. Note that the existence of a fair $k$-clustering always implies the existence of a fair $n$-clustering, for any number of colors. \cite{CKLV17} show that computing an optimal fair $n$-median is possible if we are given only two colors. While the same problem is APX-hard for three colors (see Proposition~\ref{prop:hardness}), the following theorem establishes that a randomly sampled color is always an $2$-approximate fair $n$-median on expectation. Repeating the sampling process allows us to find a good $n$-median clustering with high probability.  The pseudocode is given in Algorithm \ref{alg:randmedian}.

\begin{algorithm}
\caption{Fast Randomized Fair $n$-Median Clustering}
\begin{algorithmic}
\REQUIRE{Balanced point set $A=\biguplus_{i=1}^{\ell} A^{(i)}$ with $|A^{(i)}|=n$\\
}
\FOR{$i\in\{1,\ldots,\log 1/\delta\}$}
\STATE{Sample $A^{(t})\subset A$ uniformly at random}
\STATE{$cost(i)\leftarrow 0$}
\FOR{$j\in\{1,\ldots,\log \ell\}$}
\STATE{Compute (approximate) $EMD(A^{(t)},A^{(j)})$ and add it to cost}
\ENDFOR
\ENDFOR
\STATE{Output $A^{(t)}$ with minimal cost}
\end{algorithmic}
\label{alg:randmedian}
\end{algorithm}

\begin{theorem}
\label{thm:fastfairkmedian}
Let $A$ be an $\ell\cdot n\times d$ matrix, let $c:[\ell\cdot n]\rightarrow [\ell]$ be a balanced coloring of $A$.
Given an algorithm that computes a $\beta$-approximate fair $n$-median clustering with $2$-colors in time $T(n,d)$, there exists a randomized algorithm that computes a $(2\beta)$-approximate fair $n$-median clustering with $\ell$ colors. The algorithm runs in time $O(\ell\cdot T(n,d) \log 1/\delta)$ and succeeds with probability $1-\delta$. 
\end{theorem}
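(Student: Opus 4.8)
The plan is to follow the structure of Theorem~\ref{thm:approxclustering} specialized to $p=1$ and $k=n$ (so that the $p$-norm of Earth-Mover's distances collapses to a plain sum), but to replace the exhaustive computation of all pairwise Earth-Mover's distances and the exhaustive search over anchor colors by (i) random sampling of a single anchor color and (ii) calls to the two-color $\beta$-approximation to both estimate and realize those distances.

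First I would record the elementary but crucial fact that a fair $n$-median instance on two color classes $A^{(t)},A^{(j)}$ of $n$ points each is exactly a minimum-cost perfect matching problem: in any fair $n$-clustering of these $2n$ points every cluster holds exactly one point of each color (pigeonhole on the $n$ clusters), so a feasible solution is a bijection $\pi:A^{(t)}\to A^{(j)}$ whose $n$-median cost is $\sum_{x\in A^{(t)}}\|x-\pi(x)\|_q$; the optimum is therefore $EMD_{1,q}(A^{(t)},A^{(j)})$. Hence running the given two-color algorithm on $A^{(t)}\uplus A^{(j)}$ returns, in time $T(n,d)$, a matching $\pi_{t,j}$ with $\sum_{x\in A^{(t)}}\|x-\pi_{t,j}(x)\|_q\le\beta\cdot EMD_{1,q}(A^{(t)},A^{(j)})$. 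As in the anchor-color construction of Theorem~\ref{thm:approxclustering}, fixing any color $t$ and forming, for each $s\in[n]$, the fairlet $F_s=\{A^{(t)}_s\}\cup\{\pi_{t,j}(A^{(t)}_s):j\neq t\}$ with representative $A^{(t)}_s$ yields a valid fair $n$-clustering — the bijectivity of the $\pi_{t,j}$ guarantees that the $F_s$ partition $A$ — of total cost $\sum_{j\neq t}\sum_{x\in A^{(t)}}\|x-\pi_{t,j}(x)\|_q\le\beta\sum_{j=1}^{\ell}EMD_{1,q}(A^{(t)},A^{(j)})$.

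The heart of the argument is then to show that for a uniformly random color $t$,
\[
\mathbb{E}_t\Big[\,\sum_{j=1}^{\ell}EMD_{1,q}(A^{(t)},A^{(j)})\,\Big]=\frac{1}{\ell}\sum_{i=1}^{\ell}\sum_{j=1}^{\ell}EMD_{1,q}(A^{(i)},A^{(j)})\;\le\;2\cdot OPT_n ,
\]
where $OPT_n$ is the cost of an optimal fair $n$-median. Let $G_1,\dots,G_n$ be the fairlets of an optimal fair $n$-median with representatives $\mu_1,\dots,\mu_n$, and let $g_m^{(i)}$ denote the color-$i$ point of $G_m$, so $OPT_n=\sum_m\sum_i\|g_m^{(i)}-\mu_m\|_q$. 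For every ordered pair $(i,j)$ the fairlets induce the perfect matching $g_m^{(i)}\leftrightarrow g_m^{(j)}$, so $EMD_{1,q}(A^{(i)},A^{(j)})\le\sum_m\|g_m^{(i)}-g_m^{(j)}\|_q\le\sum_m\big(\|g_m^{(i)}-\mu_m\|_q+\|g_m^{(j)}-\mu_m\|_q\big)$ by the triangle inequality. Summing over all $\ell^2$ ordered pairs, each term $\|g_m^{(i)}-\mu_m\|_q$ is counted exactly $2\ell$ times, which gives $\sum_{i,j}EMD_{1,q}(A^{(i)},A^{(j)})\le 2\ell\cdot OPT_n$ and hence the displayed inequality; this is the averaged analogue of Equation~\ref{eq:thm2_2}, with the factor $\ell$ from the number of color pairs cancelling the $1/\ell$ from the uniform average.

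Combining the pieces, a single sampled anchor color produces a fair $n$-median clustering of expected cost at most $2\beta\cdot OPT_n$. To obtain the high-probability statement I would, as in Algorithm~\ref{alg:randmedian}, sample $O(\log 1/\delta)$ independent anchor colors, for each use the two-color subroutine both to estimate $\sum_j EMD_{1,q}(A^{(t)},A^{(j)})$ and to produce the associated matchings, and output the cheapest of the resulting fair $n$-clusterings; a Markov bound per trial plus a union bound then gives cost $O(\beta)\cdot OPT_n$ with probability at least $1-\delta$ (pinning the constant to precisely $2\beta$ needs a little extra care, since the clean Markov argument loses an additional factor). The running time is $O(\log 1/\delta)$ trials, each invoking the two-color algorithm $\ell$ times, i.e.\ $O(\ell\cdot T(n,d)\log 1/\delta)$ as claimed.

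I expect the main obstacle to be getting the bookkeeping of the averaging inequality exactly right — routing every pairwise $EMD$ through the common optimal representatives and summing over all color pairs so that the combinatorial factor $\ell$ cancels the averaging weight $1/\ell$ — together with two smaller points: verifying rigorously that the two-color $n$-median subroutine really behaves as a $\beta$-approximate Earth-Mover's distance oracle that also hands back a matching, and controlling the constant lost in the repetition/boosting step.
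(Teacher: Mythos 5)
Your proposal matches the paper's proof: both sample a uniformly random anchor color, bound $\mathbb{E}\left[\sum_{j} EMD(A^{(t)},A^{(j)})\right]$ by $2\cdot OPT_n$ via the triangle inequality for the Earth Mover's distance (the paper routes through the optimal solution $B$ using the fact that $EMD$ is a metric, which is exactly what your explicit matching-composition through the optimal fairlet representatives unpacks), boost by repeating $\log 1/\delta$ times, and realize the clustering with the two-color $\beta$-approximate subroutine. The looseness you honestly flag in the Markov/boosting step --- that expectation at most $2\,OPT_n$ does not by itself yield a $2$-approximation with per-trial probability $1/2$ --- is present in the paper's own proof as well, which asserts the probability-$1/2$ claim without justification.
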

\begin{proof}
We will start by recalling the following fact that establishes the metric properties of the Earth Mover's distance.
\begin{fact}[Rubner et al.~\cite{RubnerTG00}, Appendix A]
\label{fact:EMDmetric}
Let $(X,d)$ be a metric space with points $X$ and distance function $d$. Then the Earth Mover's distance on (weighted) point sets of equal size (or total weight) using $d$ as a ground distance is a metric.
\end{fact}

Given $\ell$ $n$-point sets $A^{(1)},\ldots A^{(\ell)}$ lying in some metric space, the fair $n$-median problem consists of finding an $n$-point set $B$ such that
$$\sum_{i=1}^{\ell} \sum_{j=1}^n \min_{\pi\in \Pi:A^{(i)}\rightarrow B}d(A_{j}^{(i)},\pi(A_{j}^{(i)}) = \sum_{i=1}^{\ell} EMD(A^{(i)},B)$$ is minimized.

We now sample a point set $A^{(t)}$ uniformly at random. Then 
\begin{eqnarray*}
\mathbb{E}[\sum_{i=1}^{\ell} EMD(A^{(t)},A^{(i)})] 
&\leq &\mathbb{E}[\sum_{i=1}^{\ell} EMD(A^{(t)},B) + EMD(A^{(i)},B)] = \sum_{i=1}^{\ell} EMD(A^{(i)},B) + \sum_{i=1}^{\ell}\mathbb{E}[EMD(A^{(t)},B)] \\
&=&  \sum_{i=1}^{\ell} EMD(A^{(i)},B) + \sum_{i=1}^{\ell}\sum_{j=1}^{\ell} \frac{EMD(A^{(j)},B)}{\ell}
= 2\sum_{i=1}^{\ell} EMD(A^{(i)},B),
\end{eqnarray*}
where we use Fact~\ref{fact:EMDmetric} in the inequality.
Hence, a random point set is always a good candidate solution for an approximate fair $n$-median clustering, with probability at least $1/2$. 
Repeating the sampling process $\log 1/\delta$ times and picking the best one yields a $2$-approximation with probability $1-(1-1/2)^{\log(1/\delta)} = 1-\delta$. 

We now run the $\beta$-approximate computation of fair $n$-median with respect to every sampled color $A^{(t)}$. Let $\pi_{i,t}$ be the matching computed by this algorithm, for every $i\in[\ell]$. We then have
\begin{eqnarray*}
\sum_{i=1}^{\ell} \sum_{p\in A^{(i)}}^n d(p,\pi_{i,t}(p)) 
&\leq &\sum_{i=1}^{\ell} \beta \cdot EMD(A^{(i)},A^{(t)}) 
\leq \sum_{i=1}^{\ell} 2\beta \cdot EMD(A^{(i)},B).
\end{eqnarray*}
\end{proof}

\paragraph{\bf Fair $k$-Center}

In the full version of the paper we show that for the special case of $k$-center in finite metrics, we can compute a set of $k$-centers that induce a $3$-approximate fair $k$-clustering. Moreover, this algorithm runs in nearly linear time.  The algorithm is essentially the farthest first traversal that is well known to produce an optimal $2$-approximation for unconstrained metric $k$-center~\cite{Gonzalez85}. 
This result, that improves for fair $k$-center over Theorem \ref{thm:approxclustering}, is based on the following theorem.

\begin{theorem}
Let $A$ be a set of $\ell\times n$ points in a finite metric, let $c:[\ell\cdot n]\rightarrow [\ell]$ be a balanced coloring of $A$, and let $k$ be an integer.
There exists a $O(ndk)$ time algorithm that computes a set of $k$ points $C\subset A$ such that there exists a $3$-approximate fair $k$ clustering using $C$ as centers.
\end{theorem}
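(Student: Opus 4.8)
The plan is to take the algorithm to be Gonzalez's farthest‑first traversal run on the whole point set $A$: start from an arbitrary point, and repeatedly add to $C$ the point of $A$ farthest from the points chosen so far, until $|C|=k$. This clearly produces $C\subseteq A$, and the $O(ndk)$ bound follows from the standard implementation in which one maintains, for every point of $A$, its distance to the current center set; each of the $k$ rounds then costs $O(n)$ to identify the farthest point and $O(nd)$ to refresh these distances.

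To analyze the approximation I would first recall the classical fact about this traversal. Let $r:=\max_{x\in A}\min_{c\in C}d(x,c)$ be the covering radius of $C$. The point that the traversal would select next, call it $c_{k+1}$, realizes this radius, and the radii at which successive centers are added are non‑increasing; hence the $k{+}1$ points $c_1,\dots,c_k,c_{k+1}$ are pairwise at distance at least $r$. By pigeonhole two of them lie in a common cluster of any optimal \emph{unconstrained} $k$-center clustering, and two points of one such cluster are within $2\cdot OPT_{\mathrm{unc}}$ of each other (through that cluster's center); since every fair clustering is in particular a $k$-clustering we have $OPT_{\mathrm{unc}}\le OPT$, where $OPT$ denotes the optimum of the fair $k$-center problem (with centers in $A$). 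Thus $r\le 2\cdot OPT$.

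Now fix an optimal fair clustering $\mathcal{L}=\{L_1,\dots,L_k\}$ with centers $\mu_1,\dots,\mu_k\in A$ and cost $OPT$. For each $j$ pick a center $c_{\nu(j)}\in C$ minimizing $d(\mu_j,c)$ over $c\in C$; since $\mu_j\in A$, this distance is at most $r\le 2\cdot OPT$. Define the output clustering by assigning, for each $i$, the cluster $B_i:=\bigcup_{j:\nu(j)=i}L_j$ to center $c_i$. The sets $B_i$ partition $A$ (each $L_j$ is sent to exactly one of them, while some $B_i$ may be empty), so this is a $k$-clustering with center set $C$, and it is fair because every $B_i$ is a disjoint union of balanced sets and therefore balanced. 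Its cost is bounded by observing that for $x\in L_j$ one has $d(x,c_{\nu(j)})\le d(x,\mu_j)+d(\mu_j,c_{\nu(j)})\le OPT+2\cdot OPT=3\cdot OPT$, which gives the claimed $3$-approximation.

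The one point that needs care — and where the constant $3$ rather than $4$ comes from — is that each optimal cluster $L_j$ must be routed to a center close to its \emph{own} optimal center $\mu_j$, within the covering radius $r\le 2\cdot OPT$; this is possible precisely because $\mu_j$ is a point of $A$ and is therefore covered by $C$ to within $2\cdot OPT$. If one could only locate a center near some representative point of $L_j$, the triangle inequality would cost an extra $OPT$ and yield only a $4$-approximation. I would also flag that this argument establishes only the \emph{existence} of a $3$-approximate fair assignment to $C$; producing such an assignment explicitly (as opposed to the centers alone, which is all the theorem asks for) amounts to solving a balanced assignment problem on top of $C$, and the corresponding near‑linear‑time procedure is deferred to the full version.
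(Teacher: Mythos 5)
Your proposal is correct and follows essentially the same route as the paper: run Gonzalez's farthest-first traversal, bound the covering radius by $2\cdot OPT$, route each optimal fair cluster wholesale to the center of $C$ nearest its own optimal center (using that optimal centers lie in $A$), and observe that a disjoint union of balanced sets is balanced, so only existence of the fair assignment is claimed. The only cosmetic difference is that you obtain the $2\cdot OPT$ covering bound via the standard pigeonhole argument on $k+1$ pairwise-separated points, whereas the paper argues via the first iteration in which some optimal cluster is hit twice; both are routine variants of the same analysis.
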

\begin{proof}
We argue why the final set of $k$ points $C$ computed by the farthest first heuristic fulfills the desired criteria.

First, consider the case that every point of $C$ is in a different optimal cluster. In this case, we may upper bound the cost of clustering to $C$ by $2OPT$ via the triangle inequality.
If $C$ does not hit all clusters of the optimal clustering, there must be some cluster that is hit at least twice. Let $i$ be the first iteration in which this occurs and denote by $C_{i-1}$ the points collected so far and by $c_i$ the added point. It then holds $d(c_i,C_{i-1})\leq 2OPT$.

By definition of $c_i$,  we know that for any cluster $O_j$ with center $o_j$ not hit by $C_{i-1}$, we have $d(o_j,C_{i-1})<d(c_i,C_{i-1})$. Since the distance of any point $p\in O_j$ to $o_j$ is at most $OPT$, we therefore have $d(p,C_{i-1})\leq d(p,o_j)+d(o_j,C_{i-1}) \leq OPT + d(c_i,C_{i-1}) \leq OPT + 2OPT = 3OPT$.

Finally, we argue why there exists a valid fair clustering with this bound. The union of two disjoint balanced clusters is a balanced cluster. Let $c_j \in O_j \cap C_{i-1}$, for any cluster $O_j$ hit by $C_{i-1}$. We assign all the points of $O_j$ to $c_j$. For any cluster $O_j$ not hit by $C_{i-1}$, we assign the points of $O_j$ to the center minimizing $\min_{c\in C_{i-1}} d(o_j,c)$.
\end{proof}

However, we remark that while we can guarantee the existence of  a good clustering using $C$ as centers, it seems hard to recover it while ensuring  fairness. This stands in contrast to unconstrained clustering, where one can simply assign every point to its closest center. For the special case $\ell=2$, a fair clustering may be recovered using flow-based techniques. For $\ell\geq 3$, deciding whether there exists a clustering with some cost, given a candidate set of centers, it is a hard problem. 
The proof is a simple reduction from the 3D matching problem. Since the reduction is similar to the Proposition~\ref{prop:hardness}, we omit details.

\begin{proposition}
Let $A$ be a set of $\ell\cdot n$ points in some finite metric with a fair coloring $c:[\ell\cdot n]\rightarrow [\ell]$, let $C$ be (a possibly optimal) set of $k$ centers and let $t>0$ be a parameter. Then deciding whether there exists a fair $k$-center clustering using $C$ as centers with the range $[t,3t]$ is NP-hard.
\end{proposition}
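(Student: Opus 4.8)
The plan is to reduce from the NP-complete \textsc{3D matching} problem, along the same lines as the reduction behind Proposition~\ref{prop:hardness} but augmented with an explicit set of candidate centers; it suffices to produce instances with $\ell=3$. Recall that a \textsc{3D matching} instance is given by three disjoint ground sets $X,Y,Z$ with $|X|=|Y|=|Z|=m$ and a family of triples $T\subseteq X\times Y\times Z$, and asks whether $T$ contains a perfect matching, i.e.\ a subfamily of $m$ triples covering every element of $X\cup Y\cup Z$ exactly once. From such an instance I would build a fair $k$-center instance with $\ell=3$ colors: introduce one point $p_x$ of color $1$ for each $x\in X$, one point $p_y$ of color $2$ for each $y\in Y$, and one point $p_z$ of color $3$ for each $z\in Z$, so that $A$ has $3m$ points and is balanced with $n=m$ points per color. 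Let the candidate center set be $C=\{c_\tau : \tau\in T\}$, one center per triple, so $k=|T|$. On $A\cup C$ I would put the near-uniform metric in which every distance between two distinct elements equals $t/2$, \emph{except} that $d(p_e,c_\tau)=t$ whenever the element $e$ occurs in $\tau$. This is a genuine metric: every distance between distinct elements lies in $[t/2,t]$, so in any triangle the two shorter sides sum to at least $t/2+t/2=t$, which is at least the longest side.

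For the equivalence, note first that the whole space has diameter $t$, so the upper bound $3t$ of the range is automatically satisfied and the only binding requirement on a fair clustering of radius range $[t,3t]$ is that every point be assigned to a center at distance \emph{at least} $t$. In our metric this means precisely that $p_e$ may be assigned to $c_\tau$ only if $e\in\tau$. Since $\tau=(x,y,z)$ contains exactly three elements, the only points eligible for $c_\tau$ are $p_x,p_y,p_z$ --- one of each color --- so a non-empty balanced cluster at $c_\tau$ must equal $\{p_x,p_y,p_z\}$. Hence a fair clustering achieving range $[t,3t]$ exists iff one can choose $m$ triples of $T$ whose element sets partition $X\cup Y\cup Z$, that is, iff the \textsc{3D matching} instance has a perfect matching; for the converse direction one simply sends, for each triple of the matching, its three points to the corresponding center (all at distance exactly $t$) and leaves the remaining $|T|-m$ centers empty. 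This yields NP-hardness. (That the clustering uses only $m$ of the $|T|$ supplied centers is harmless under the usual convention that a $k$-clustering may leave clusters empty; and one can further check that with $k=|T|$ the chosen $C$ is itself (near-)optimal for the unconstrained $k$-center objective, so the hardness is not an artifact of a poorly chosen center set, matching the ``possibly optimal $C$'' clause of the statement.)

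The step I expect to be the real obstacle is designing the metric so that the range constraint genuinely forbids the ``wrong'' assignments. The obvious first attempt --- make incompatible point--center pairs \emph{far}, at distance more than $3t$, so that the \emph{upper} bound $3t$ rules them out --- is doomed: if compatible pairs are at distance $t$, then whenever a triple $\tau\ni x$ shares an element $v$ with a triple $\tau'$, the path $p_x - c_\tau - p_v - c_{\tau'}$ already forces $d(p_x,c_{\tau'})\le 3t$ in any metric, so incompatible distances cannot be pushed past $3t$. This is exactly why the \emph{lower} bound $t$ in the range is indispensable, and why the construction instead forbids incompatible pairs by making them \emph{too close} ($t/2<t$); the only remaining work is the (routine) triangle-inequality check sketched above. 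A minor variant of this same construction, starting from a gap version of \textsc{3D matching}, would in fact re-prove an APX-type hardness, but only NP-hardness is claimed here.
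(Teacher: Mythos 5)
Your reduction rests on a reading of ``with the range $[t,3t]$'' under which the \emph{lower} endpoint $t$ is a hard constraint, i.e.\ every point must be assigned to a center at distance \emph{at least} $t$. That is not the decision problem the proposition is about. In context (the preceding theorem guarantees that a fair assignment of radius at most $3\cdot OPT$ to the computed centers \emph{exists}, and the surrounding text explains that the difficulty is in \emph{recovering} such an assignment), the operative constraint is the upper bound on the maximum radius: the claim is that, given $C$ and a value $t$, one cannot decide in polynomial time whether a fair assignment of radius below $3t$ exists even when a radius-$t$ assignment is achievable in the yes-instances --- a factor-$3$ inapproximability for the assignment step. Under that reading your instance is vacuous: since every distance in your metric is at most $t$, \emph{every} assignment (fair or not) has radius at most $t$, so the answer is always ``yes'' and nothing is being decided. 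Requiring points to be far from their centers is not a $k$-center constraint, so the hardness you establish is for a different, artificial problem.

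The intended construction is essentially the one you dismissed as ``doomed,'' and the obstruction you identified is precisely why the gap in the statement is $3$ and not more. Keep your gadget (one point per element of $X\uplus Y\uplus Z$, colored by part; one candidate center per triple), but set $d(p_v,c_\tau)=t$ if $v\in\tau$ and $d(p_v,c_\tau)=3t$ otherwise, with all element--element and triple--triple distances equal to $2t$. This is a metric: the only tight triangles are of the form $3t\le t+2t$. A balanced cluster at $c_\tau$ all of whose points are at distance $t$ must consist of exactly the three points of $\tau$, so a fair assignment of radius $t$ exists iff the 3D-matching instance has a perfect matching, while otherwise some point must be assigned to a non-incident center and every fair assignment has radius exactly $3t$. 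Distinguishing radius $t$ from radius $3t$ is therefore NP-hard, which is the content of the proposition and mirrors the $1$-versus-$2$ construction of Proposition~\ref{prop:hardness} that the paper says this proof should resemble. Your observation that incompatible distances cannot be pushed \emph{strictly} beyond $3t$ is correct, but pushing them to exactly $3t$ is all that is needed.
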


\section{A PTAS for Fair Clustering in Euclidean Spaces with Constant $k$}

Lastly, we briefly show how to derive a $(1+\varepsilon)$ approximation for fair $k$-clustering in Euclidean spaces if the number of centers is constant. 
This shows that a separation between the hardness of unconstrained clustering and fair clustering has to consider large values of $k$.

\begin{theorem}
Let $A$ be a set of points in Euclidean space and let $k$ be a constant. Then there exists an algorithm that computes in time $O(n^{\text{poly}(k/\varepsilon)})$ a $(1+\varepsilon)$ approximation for fair $k$-median, fair $k$-means, and fair $k$-center.
\end{theorem}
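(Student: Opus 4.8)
The plan is to reduce fair $k$-clustering with constant $k$ to two essentially independent ingredients: (i) a polynomial-size family $\mathcal{S}$ of candidate $k$-tuples of centers that provably contains a near-optimal tuple, and (ii) for any \emph{fixed} tuple of centers, an exact polynomial-time computation of the best \emph{fair} assignment of the points to those centers. Iterating ingredient (ii) over all tuples in $\mathcal{S}$ and returning the cheapest fair clustering found yields the PTAS, because the analysis never has to ``guess'' which point lands in which cluster beyond enumerating a polynomial number of possibilities. Throughout, $\ell$ is the number of colors, and we may assume $\ell$ is polynomially bounded in the total number of points.

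For ingredient (i) I would invoke the candidate-center constructions underlying the known Euclidean PTAS's. For $k$-means and $k$-median, uniform-sampling arguments (following Inaba--Katoh--Imai and Kumar--Sabharwal--Sen) show that the optimal one-center cost of any cluster $L\subseteq A$ is $(1+\varepsilon)$-approximated by the centroid (respectively, an easily enumerated point determined by a short grid search in the affine span) of some subset of $L$ of size $\text{poly}(1/\varepsilon)$; for $k$-center, the Badoiu--Clarkson minimum-enclosing-ball core-set shows that the one-center cost of $L$ is $(1+\varepsilon)$-approximated by taking as center the minimum-enclosing-ball center of some $O(1/\varepsilon^{2})$-point subset of $L$. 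In every case, ranging over all subsets of $A$ of the relevant size $r=\text{poly}(1/\varepsilon)$ produces at most $\binom{n}{r}=n^{\text{poly}(1/\varepsilon)}$ candidate centers, so $|\mathcal{S}|=n^{\text{poly}(k/\varepsilon)}$ after taking $k$-tuples (in constant dimension one may instead use a fine grid around the input). Let $L_1^{*},\dots,L_k^{*}$ with centers $c_1^{*},\dots,c_k^{*}$ be an optimal \emph{fair} $k$-clustering. Applying the per-cluster guarantee simultaneously to each $L_j^{*}$ shows $\mathcal{S}$ contains a tuple $(\hat c_1,\dots,\hat c_k)$ for which assigning every point of $L_j^{*}$ to $\hat c_j$ costs at most $(1+\varepsilon)\,OPT_k$; and this assignment induces exactly the partition $L_1^{*},\dots,L_k^{*}$, hence is fair. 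Note the candidate guarantees are stated for an arbitrary partition of $A$ and therefore transfer verbatim to the (unknown) optimal fair partition.

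For ingredient (ii), fix a tuple $(\hat c_1,\dots,\hat c_k)\in\mathcal{S}$; unlike in unconstrained clustering we cannot simply send each point to its nearest center, since the assignment must be balanced. The key observation is that any fair clustering assigns to cluster $j$ the same number $m_j$ of points of \emph{every} color, with $m_1+\dots+m_k=n$, and there are only $\binom{n+k-1}{k-1}=O(n^{k-1})$ such \emph{size profiles}. I would enumerate all size profiles; for a fixed profile the optimal fair assignment decouples over the $\ell$ colors, and for color $i$ it is exactly a min-cost transportation problem (route the $n$ points of $A^{(i)}$ to the $k$ centers, with $m_j$ units into center $j$, at edge cost $\mathrm{dist}(p,\hat c_j)$ or $\mathrm{dist}(p,\hat c_j)^2$), solvable in polynomial time; the profile's cost is the sum over colors, and we minimize over profiles. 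For fair $k$-center the same scheme applies with ``min cost'' replaced by ``feasibility within radius $t$'': additionally guess $t$ among the at most $nk$ point-to-center distances, and for each profile and each color test by a single max-flow computation whether the length-$\le t$ bipartite graph between $A^{(i)}$ and the centers admits an assignment with degree sequence $(m_1,\dots,m_k)$; the smallest $t$ feasible for all $\ell$ colors for some profile is the optimal radius for this tuple.

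Putting the pieces together: the algorithm loops over the $n^{\text{poly}(k/\varepsilon)}$ tuples in $\mathcal{S}$, spends $O(n^{k-1})\cdot\ell\cdot\text{poly}(n)$ transportation (or flow) work per tuple, with an extra $O(nk)$ factor for the radius search in the $k$-center case, and outputs the cheapest fair clustering seen, for a total running time $n^{\text{poly}(k/\varepsilon)}$ as claimed. Correctness holds because the near-optimal tuple from ingredient (i) is examined, and for it ingredient (ii) returns an assignment no worse than the fair assignment $L_1^{*},\dots,L_k^{*}$, i.e.\ of cost at most $(1+\varepsilon)\,OPT_k$. The main obstacle --- and the only genuine departure from the classical unconstrained PTAS --- is ingredient (ii): because fixing the centers does not determine the clustering under the balance constraint, one must show the optimal fair assignment to a fixed center set is still recoverable in polynomial time, which is precisely what the size-profile enumeration combined with per-color transportation/flow delivers.
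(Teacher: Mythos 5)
Your proposal is correct and follows essentially the same route as the paper: enumerate $n^{\mathrm{poly}(k/\varepsilon)}$ candidate center tuples via small-subset/weak-coreset constructions, enumerate the $O(n^k)$ cluster-size profiles, and for each fixed tuple and profile recover the optimal fair assignment by a per-color min-cost transportation (or threshold max-flow for $k$-center) computation, relying on integrality of the flow. Your write-up is in fact somewhat more explicit than the paper's about why the candidate-center guarantee transfers to the unknown optimal fair partition and about the $k$-center radius search, but the underlying argument is identical.
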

\begin{proof}
The high level idea is similar to early polynomial time approximation schemes for unconstrained $k$-clustering~\cite{AckermannBS08,BadoiuHI02}, with a few modifications to account for fairness. Assume we are given an oracle that (i) returns a set of $k$ centers such that these centers form a $(1+\varepsilon)$ approximation and (ii) returns the size of the clusters associated to these centers. 
If we have access to both, we can recover a clustering with the same approximation ratio by solving the following minimum transportation problem.
For every color, we construct an assignment as follows.
Every input point $p$ corresponds to a node $v_p$ in a flow network. Every center $c$ corresponds to a node $u_c$. These nodes are connected by a unit capacity edge. Furthermore, we have unit capacity edges from the source node to each $v_p$, as well as edges from the nodes $u_c$ to the target node. These edges have capacity that are exactly the target size of the clustering.
We now find a feasible flow such that the connection cost $\sum_v\sum_u f(v_p,u_c)\cdot d(p,c)$ is minimized, where $d(p,c)$ corresponds to the Euclidean distance between points $p$ and $c$~\footnote{For $k$-means, we would have to use squared Euclidean distances. For $k$-center, we would use a threshold network that only connects nodes to centers that are within distance $(1+\varepsilon)\cdot OPT$ and find an arbitrary flow.}. Finding a feasible flow can be done in polynomial time, moreover such a flow is integral, i.e. guaranteed to be a fair assignment.

To remove the oracle, we do the following. For (ii), we observe that there are $O(n^k)$ different ways of selecting the sizes of the $k$ clusters, given a ground set of $n$ points.
For (i), it is well known that for all of the considered objectives, there exist weak coresets of for a single center of size $\text{poly}(\varepsilon^{-1})$, see~\cite{BadoiuHI02} and~\cite{AckermannBS08}.
Weak coresets essentially satisfy the following property: Given a point set $A$, a weak coreset wrt to some objective is a subset of $S$ of $A$ such that a $(1+\varepsilon)$ approximation computed on $S$ is a $(1+O(\varepsilon))$ computed on $A$.

Hence, we can find a suitable set of $\text{poly}(k/\varepsilon^{-1})$ points from which to compute $k$ candidate centers by enumerating all $\text{poly}(k/\varepsilon^{-1})$-tuples in time $n^{\text{poly}(k/\varepsilon^{-1})}$.
\end{proof}

We complement this result by showing a fairlet decomposition is APX-hard for $\ell\geq 3$. In particular, we also show that computing a better than $2$-approximate $n$-center clustering decomposition is NP-hard for $\ell\geq 3$. 
Hence, the analysis of Theorem~\ref{thm:approxclustering} is tight.
If a better approximation algorithm for fair clustering exists, it will have to rely on a different technique.
Note that this stands in contrast to the computability of an optimal fairlet decomposition for $\ell=2$ colors proposed by~\cite{CKLV17}.

\begin{proposition}
\label{prop:hardness}
Let $A$ be a set of $\ell\times d$ points in a finite metric, $\ell\geq 3$, let $c:[\ell\cdot n]\rightarrow [\ell]$ be a balanced coloring of $A$. Then approximating fair $n$-center beyond a factor of $2$
and approximation fair $n$-median beyond a factor of $\frac{96}{95}$ is $NP$-hard. 
\end{proposition}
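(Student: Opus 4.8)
The plan is to reduce from the $3$-\textsc{Dimensional Matching} problem ($3$-\textsc{DM}) and, for the stronger quantitative claim, from its gap (\textsf{APX}-hard) variant. The reduction is natural because, when $\ell=3$ and each colour class has $n$ points, a fair $n$-clustering of $A^{(1)}\cup A^{(2)}\cup A^{(3)}$ is precisely a partition into $n$ balanced fairlets, i.e.\ into $n$ triples each containing exactly one point of each colour --- a $3$-dimensional assignment. So I identify $X:=A^{(1)}$, $Y:=A^{(2)}$, $Z:=A^{(3)}$, all of size $n$, with the three ground sets of the $3$-\textsc{DM} instance, whose triple collection I call $T\subseteq X\times Y\times Z$.

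First I build the metric gadget. The ground set of the space is $X\cup Y\cup Z$ together with one auxiliary ``triple point'' $w_t$ for each $t\in T$; I put $d(w_t,p)=1$ whenever $p$ is one of the three coordinates of $t$, and set every remaining distance between distinct points to $2$. Since all nonzero distances lie in $\{1,2\}$ the triangle inequality holds trivially, so this is a finite metric (if one insists that a fairlet be represented by one of its own points, the $w_t$ are kept as colour-free Steiner points usable only as representatives; nothing else changes). The key structural property is: a fairlet $\{x,y,z\}$ admits a representative at radius $1$ (equivalently, $1$-median cost $3$) if and only if $(x,y,z)=t$ for some $t\in T$, in which case $w_t$ is that representative; every other fairlet has radius $2$ and $1$-median cost at least $4$ (a colour point as centre costs $2+2=4$; a triple point $w_{t'}$ with $t'\neq t$ sees at least one of the three points at distance $2$). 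Thus valid triples of $T$ are exactly the ``cheap'' fairlets, and --- this is the crux --- the inherently ternary matching constraint, which pairwise distances cannot express, is encoded by these auxiliary points; one must also check that an optimal clustering cannot abuse the $w_t$'s to cheapen non-triple fairlets, which the distance bounds above rule out.

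For fair $n$-center this already finishes the argument: if the $3$-\textsc{DM} instance has a perfect matching, the induced fairlet decomposition has every fairlet of radius $1$, so $\mathrm{OPT}=1$; otherwise every fairlet decomposition contains a fairlet that is not a triple of $T$, of radius $2$, so $\mathrm{OPT}=2$. As $3$-\textsc{DM} is \textsf{NP}-complete, telling $\mathrm{OPT}\le 1$ from $\mathrm{OPT}\ge 2$ is \textsf{NP}-hard, and any polynomial-time $(2-\varepsilon)$-approximation would have to return cost $1$ on yes-instances (the objective is a maximum of fairlet radii, each in $\{1,2\}$), hence decide $3$-\textsc{DM}; so no such approximation exists unless \textsf{P}$=$\textsf{NP}.

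For fair $n$-median I instead start from a gap version of $3$-\textsc{DM}: it is \textsf{NP}-hard to distinguish instances admitting a perfect matching from instances in which every partition of $X\cup Y\cup Z$ into triples uses at most a $(1-\gamma)$-fraction of triples from $T$, for an absolute constant $\gamma>0$ (the \textsf{APX}-hardness of maximum bounded-occurrence $3$-dimensional matching). In the gadget a fairlet decomposition with $m$ valid triples costs $3m+4(n-m)=4n-m$; so yes-instances cost $3n$ while no-instances cost at least $4n-(1-\gamma)n=(3+\gamma)n$, a ratio of $1+\gamma/3$. Optimising the gadget --- adjusting the two distance values, and, if helpful, weighting or duplicating components so the cost separation between good and bad fairlets is as large as the triangle inequality permits relative to the base cost --- and inserting the sharpest available inapproximability constant for bounded-occurrence $3$-\textsc{DM} yields the stated $\tfrac{96}{95}-\varepsilon$ bound. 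The main obstacle is precisely this last, quantitative step: the qualitative \textsf{APX}-hardness is straightforward once the gadget is in place, but squeezing out the exact constant $\tfrac{96}{95}$ requires a tight gadget analysis matched against the best known hardness of $3$-\textsc{DM}.
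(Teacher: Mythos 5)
Your proposal matches the paper's proof essentially exactly: the same reduction from $3$-dimensional matching with a point per hyperedge at distance $1$ from its three elements and all other distances equal to $2$, the same $1$-versus-$2$ gap argument for fair $n$-center, and the same appeal to the inapproximability of bounded-occurrence Max-$3$DM (Chleb\'ik--Chleb\'ikov\'a) for fair $n$-median. Your median cost accounting $4n-m$ is if anything slightly more careful than the paper's ``$3n+t$'' bookkeeping, and, like the paper, you defer the exact constant $\tfrac{96}{95}$ to the cited $3$DM hardness bound rather than deriving it in full.
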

\begin{proof}
We give a reduction from $3$-dimensional matching to fair $k$-center with three colors (a generalization from $\ell$-dimensional matching and $\ell$ colors is straightforward). Given a hypergraph $G(X\uplus Y \uplus Z,E)$, with disjoint nodes sets $X,Y,Z$ of size $n$ each and $k$ hyperedges $E\subset X\times Y\times Z$, $3$-dimensional matching consists of deciding whether there exists perfect hypermatching, i.e. a collection of $n$ pairwise disjoint hyperedges $H\subseteq E$. 

We construct an instance of fair $k$-center as follows. Each hyperedge $e\in E$ will be mapped to some point $p_e$ and also every node $v\in X,Y,Z$ will be mapped to some point $p_v$. The points corresponding to hyperedges will be our candidate set of centers $C$. We now define the distances between our points as follows. For nodes $v$ and hyperedges $e$, we set $d(p_v,p_e)=\begin{cases}1 &\text{if }v\in e \\
2 &\text{if } v\notin e \end{cases}$. 
The remaining distances are set to $2$.
This trivially results in a metric.


Now, assume that a perfect hypermatching exists. Then the fair $n$-center clustering cost is precisely $1$. If, however, no perfect hypermatching exists, the cost is $2$. Distinguishing between these two cases is NP-hard, hence approximating fair $n$-center beyond a factor is also $NP$ hard.

Similary, if a perfect hypermatching exists, the cost of a fair $n$-median clustering is precisely $3n$. If the size of the largest hypermatching is $3n-t$, then at least $t$ points have to pay $2$, i.e. the total cost is at least $3n + t$. Since distinguishing between a perfect hypermatching and a hypermatching of size $95/94$~\cite{ChlebikC03}, this implies $t\geq \frac{1}{95}n$ and therefore APX hardness beyond a factor $96/95$
%
\end{proof}

\section{Experimental Analysis}
\label{se:experiments}

\subsection{Dataset Description}

The datasets used for experiments are taken from the previous literature \cite{BCN19, CKLV17, BIOSVW19}. 
As our interest is in the multiple-color scenario, we ran our experiments considering 8 colors; for completeness, we also consider similar experiments with 4 colors in the supplementary material. Each color represents a protected class, characterized by some particular value of the chosen protected attributes.
We selected protected attributes to obtain 8 classes in total, and we also subsampled the original records for obtaining the same number of records for each class.
In total, we used the six data sets.
We report averages computed over 100 samples of 1000 distinct points. 
Each sample is a perfectly balanced set of points with respect to the eight colors described in the following.

\paragraph{Adults}
This dataset\footnote{https://archive.ics.uci.edu/ml/datasets/Adult} 
contains ``1994 US census'' records about registered individuals including 
age, education, marital status, occupation, ethnicity, sex, hours worked per week, 
native country, and others. Following \cite{BCN19} and \cite{CKLV17}, the 
numerical attributes chosen to represent points in the Euclidean space are 
\texttt{age, fnlwgt, education-num, capital-gain, hours-per-week}. The protected 
attributes chosen to represent the classes are \texttt{Sex}, \texttt{Ethnicity}, \texttt{Income}, where each of them takes only 2 possible values.
For the experiments, we used 100 balanced subsamples of 1000 distinct records.

\paragraph{Athletes}
This dataset\footnote{www.kaggle.com/heesoo37/120-years-of-olympic-history-athletes-and-results} 
contains bio data on Olympic athletes and medal results from Athens 1896 to Rio 2016. 
The selected features are \texttt{Age}, \texttt{Height}, \texttt{Weight}. 
The protected attributes are \texttt{Sex}, \texttt{Sport}, \texttt{Medal} (two sports 
were selected - gymnastics and basketball - and two types of athletes were considered 
for the third attribute - athletes who won at least one medal and athletes who did not). 
For the experiments, we used 100 balanced subsamples of 1000 distinct records.

\paragraph{Bank}
This dataset\footnote{https://archive.ics.uci.edu/ml/datasets/Bank+Marketing} 
stems from direct marketing campaigns, based on phone calls, of a Portuguese banking institution. 
As in \cite{BCN19} and \cite{CKLV17}, the selected features to 
represent the points in the space are \texttt{age, balance, duration}. The protected 
attributes are \texttt{marital status} (married or not), \texttt{education} (secondary 
or tertiary), \texttt{housing}. 
For the experiments, we used 100 balanced subsamples of 1000 distinct records.

\paragraph{Diabetes}
The dataset\footnote{https://archive.ics.uci.edu/ml/datasets/Diabetes+130-US+hospitals+for+years+1999-2008}, 
used for experiments in \cite{CKLV17}, represents 10 years (1999-2008) of 
clinical care at 130 US hospitals and integrated delivery networks. 
It includes over 50 
features representing patient and hospital outcomes; of these features, 4 were chosen to 
represent the points in the space: \texttt{time\_in\_hospital, num\_lab\_procedures, 
num\_medications, number\_diagnoses}.
The protected attributes are \texttt{sex}, \texttt{ethnicity} (\texttt{Caucasian} or 
\texttt{AfricanAmerican}), \texttt{age} (this attribute has been dichotomised in order to have 
two classes of ages: people who are respectively less and more than 50 years old). 
We ran the experiments on 100 subsamples of 1000 distinct records each.
For the experiments, we used 100 balanced subsamples of 1000 distinct records.

\paragraph{Credit cards}
This dataset\footnote{https://archive.ics.uci.edu/ml/datasets/default+of+credit+card+clients}, 
contains information on credit card holders from a certain credit card in Taiwan. 
Here, the same 14 features chosen by \cite{BCN19} were selected, while the 
protected attributes are \texttt{sex}, \texttt{education} (\texttt{Graduate School} or 
\texttt{University}), \texttt{marriage} (married or not).
For the experiments, we used 100 balanced subsamples of 1000 distinct records.

\paragraph{CensusII}
This dataset contains records extracted from the \textit{USCensus1990raw}\footnote{https://archive.ics.uci.edu/ml/datasets/US+Census+Data+\%281990\%29} 
data set (also used in~\cite{BIOSVW19}), containing 2458285 records composed by 68 attributes.
Among all of these attributes, 9 have been chosen to represent the points in the Euclidean space: \texttt{AGE, AVAIL, CITIZEN, CLASS, DEPART, HOUR89, HOURS, PWGT1, TRAVTIME}.
For this dataset, the selected protected attributes are \texttt{SEX} (\texttt{Female}, \texttt{Male}), \texttt{RACE} (dichotomized as \texttt{White}, \texttt{notWhite}) and \texttt{MARITAL} (dichotomized as \texttt{NowMarried}, \texttt{NowNotMarried}).
For the experiments, we used 100 balanced subsamples of 1000 and 450000 distinct records.

\subsection{Setup and Algorithms}

We solved the fair $k$-median problem by implementing Algorithms~\ref{alg:reduction},~\ref{alg:randmedian}, 
\textbf{Q} and \textbf{Excellent}.
\textbf{Q} is similar to Algorithm~\ref{alg:reduction}, except that we select the color with minimum perfect matching cost. This algorithm is guaranteed to return an $(\alpha+4)$ approximation and does so slightly faster than Algorithm~\ref{alg:reduction}. 
\textbf{Excellent} is a further variant of Algorithm~\ref{alg:reduction} that computes a good clustering for each color and subsequently performs a fair assignment. The approximation factor is theoretically equal to that of Algorithm~\ref{alg:reduction}, but sometimes improves empirically.

We ran the algorithms for all values of $k$ between $2$ and $20$. For $1$ center, any solution is naturally fair. Since there was already little to no difference between the cost of a fair $20$ clustering and the cost of a fair $n$ clustering, we did not consider larger values of $k$.

We compared these algorithms with the implementation of~\cite{BCN19}.
For the largest data set (\textit{USCensus1990raw}) consisting of $8$ colors with a total of $450000$ points, the code by~\cite{BCN19} did not terminate. 
On this dataset, we showcased the modularity of our approach by combining it with the fast fairlet algorithm by~\cite{BIOSVW19}.

Since our algorithm requires a solver for the unconstrained $k$-median problem, 
For all 1000-points datasets, we used the single-swap local search heuristic, while yields a 5-approximation in the worst case~\cite{AryaGKMMP04}.

For the 450000-points \textit{USCensus1990raw} dataset, local search is infeasible to run. Instead, we used a simple heuristic that essentially mimics the $k$-means++ algorithm~\cite{AV07}: First we sample $k$ centers by iteratively picking the next center proportionate to its distance to the previously chosen centers, and then running the $k$-medoids algorithm to further refine the solution.
For the experiments we used a Intel Xeon 2.4Ghz with 24GB of RAM and a Linux Ubuntu 18.04 LTS Operating System.

\subsection{Results}

The left plot of the following figures reports the aggregated average cost of all tested methods; the right plot reports running times. 
In addition the fair clustering algorithms, we also reported costs for fair $n$ clustering, which provides a lower bound for any fairlet-based algorithm, as well as the cost of an unconstrained solution.

For the most part, the algorithm by~\cite{BCN19} has comparable cost to our algorithms. Furthermore, we empirically observe that~\cite{BCN19} almost always computes a balanced solution, as opposed to the bicriteria result sated in their paper. 
Specifically, less than 0.8\% of instances for \textit{Diabetes}, less than 0.6\% of instances for \textit{Credit cards} dataset, less than 0.3\% of instances for \textit{USCensus1990raw} dataset, less than 0.2\% of instances for \textit{Athletes} dataset, and less than 0.05\% of instances for \textit{Bank} yielding an unfair solution.
Our algorithms, of course, always guarantee fairness. 
All of our algorithms perform slightly better than~\cite{BCN19} on data sets in which the fairlets (i.e. the fair $n$ clusterings) are very cheap compared to the cost of a $k$-clustering, see Figure~\ref{fig:exp-1-k-median-diabetes-8-colors}. 
On data sets, where the fairlets are more expensive their is little difference in cost, see Figure~\ref{fig:exp-1-k-median-censusII-8-colors}.

In terms of running time, all of our algorithms run substantially faster than~\cite{BCN19} by roughly factors of 100 or more. Algorithms~\ref{alg:randmedian} and {\bf Q} have an average running time of 176msc and 258msec, respectively. This is also significantly faster than Algorithms~\ref{alg:reduction} and {\bf Excellent} (1111msec and 1044msc on average respectively), while having a roughly comparable cost.

\begin{figure*}[h!]
\subfloat[]{\includegraphics[width=.6125\textwidth]{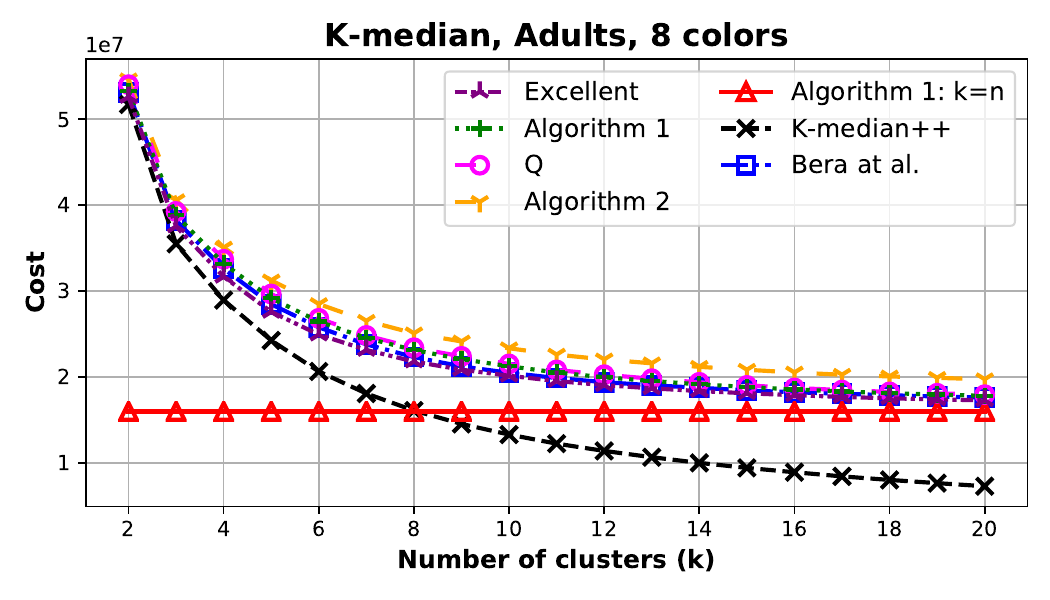}}
\subfloat[]{\includegraphics[width=.35\textwidth]{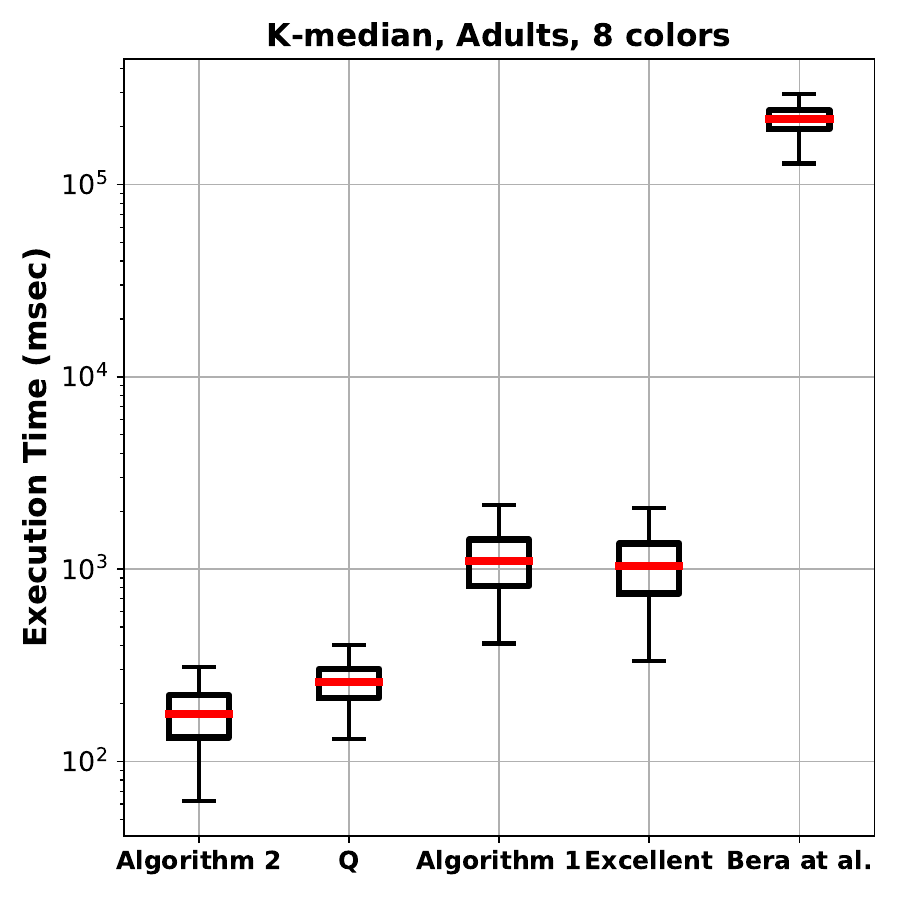}}
\caption{Average cost and execution time of the fair-k-median methods on \textit{Adults} dataset: 8 colors,	 100 subsamples of 1000 distinct points each.}
\label{fig:exp-1-k-median-adults-8-colors}
\end{figure*}
\begin{figure*}[h!]
\subfloat[]{\includegraphics[width=.6125\textwidth]{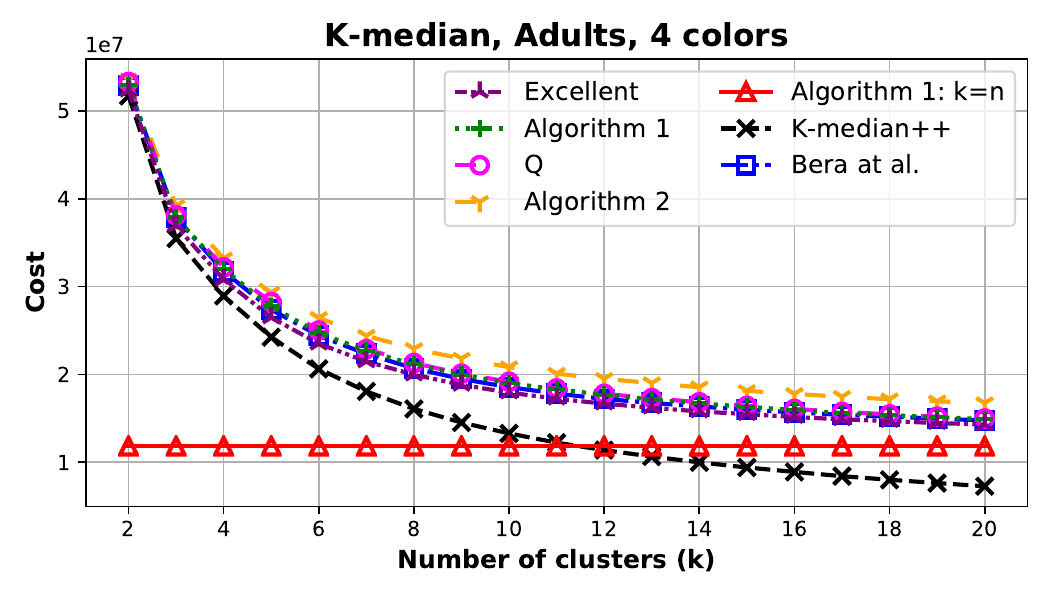}}
\subfloat[]{\includegraphics[width=.35\textwidth]{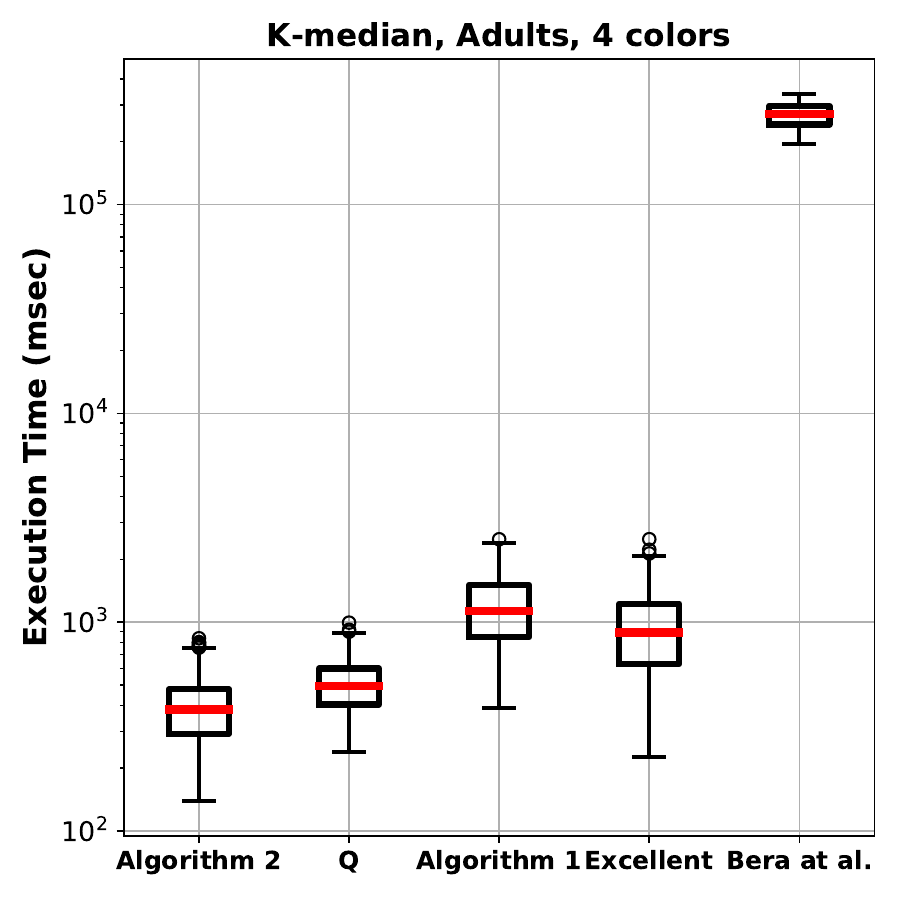}}
\caption{Average cost and execution time of the fair-k-median methods on \textit{Adults} dataset: 4 colors,	 100 subsamples of 1000 distinct points each.}
\label{fig:exp-1-k-median-adults-4-colors}
\end{figure*}

\begin{table*}[h!]
\begin{center}
\begin{tabular}{|c|c|c|c|}
\hline

            & $k \in [2, 5]$ & $k \in [6, 10]$ & $k \in [11, 20]$ \\ \hline
 {\bf Algorithm~\ref{alg:reduction}: k=n}  & 15981152.5 \textit{1968761.8} & 15981152.5 \textit{1968761.8} & 15981152.5 \textit{1968761.8} \\ \hline
 {\bf k-median++}  & 35093729.4 \textit{10482181.7} & 16518716.6 \textit{2704863.2} & 9390692.2 \textit{1632344.3} \\ \hline
 {\bf Excellent}  & 37334601.9 \textit{9646426.1} & 22164330.5 \textit{2448591.7} & 18136890.1 \textit{2030020.3} \\ \hline
 {\bf Algorithm~\ref{alg:reduction}}  & 38601051.2 \textit{9319399.3} & 23480401.1 \textit{2606059.3} & 18863910.7 \textit{2110809.0} \\ \hline
 {\bf Q}  & 39150199.6 \textit{9445281.9} & 23786205.0 \textit{2731965.7} & 19069140.1 \textit{2222137.3} \\ \hline
 {\bf Algorithm~\ref{alg:randmedian}}  & 40372180.0 \textit{9277749.3} & 25507540.8 \textit{3548581.9} & 20864471.0 \textit{3425165.9} \\ \hline
 {\bf ~\cite{BCN19}}  & 38101545.1 \textit{9528290.7} & 22713743.0 \textit{2587922.8} & 18488295.4 \textit{2018105.9} \\ \hline

\end{tabular}
\caption{Average and standard deviation of the cost of the fair-k-median methods on \textit{Adults} dataset: 8 colors, 100 subsamples of 1000 distinct points each.}
\label{tab:adults-error-8}
\end{center}
\end{table*}

\begin{table*}[h!]
\begin{center}
\begin{tabular}{|c|c|c|c|}
\hline

            & $k \in [2, 5]$ & $k \in [6, 10]$ & $k \in [11, 20]$ \\ \hline
 {\bf Algorithm~\ref{alg:reduction}: k=n}  & 11839345.7 \textit{2070002.0} & 11839345.7 \textit{2070002.0} & 11839345.7 \textit{2070002.0} \\ \hline
 {\bf k-median++}  & 35093729.4 \textit{10482181.7} & 16518716.6 \textit{2704863.2} & 9390692.2 \textit{1632344.3} \\ \hline
 {\bf Excellent}  & 36625310.1 \textit{9921526.0} & 20339741.6 \textit{2570564.3} & 15488437.9 \textit{2061712.3} \\ \hline
 {\bf Algorithm~\ref{alg:reduction}}  & 37621394.0 \textit{9712253.2} & 21512025.2 \textit{2700895.7} & 16314405.3 \textit{2192492.3} \\ \hline
 {\bf Q}  & 37958840.6 \textit{9713117.1} & 21687331.5 \textit{2756245.2} & 16417451.2 \textit{2213958.7} \\ \hline
 {\bf Algorithm~\ref{alg:randmedian}}  & 38834788.7 \textit{9553619.9} & 23278062.9 \textit{3622493.8} & 18138272.2 \textit{3406247.1} \\ \hline
 {\bf ~\cite{BCN19}}  & 37450212.1 \textit{9797625.9} & 21096203.8 \textit{2680446.0} & 16013188.3 \textit{2054989.7} \\ \hline

\end{tabular}
\caption{Average and standard deviation of the cost of the fair-k-median methods on \textit{Adults} dataset: 4 colors, 100 subsamples of 1000 distinct points each.}
\label{tab:adults-error-4}
\end{center}
\end{table*}

\begin{figure*}[h!]
\subfloat[]{\includegraphics[width=.6125\textwidth]{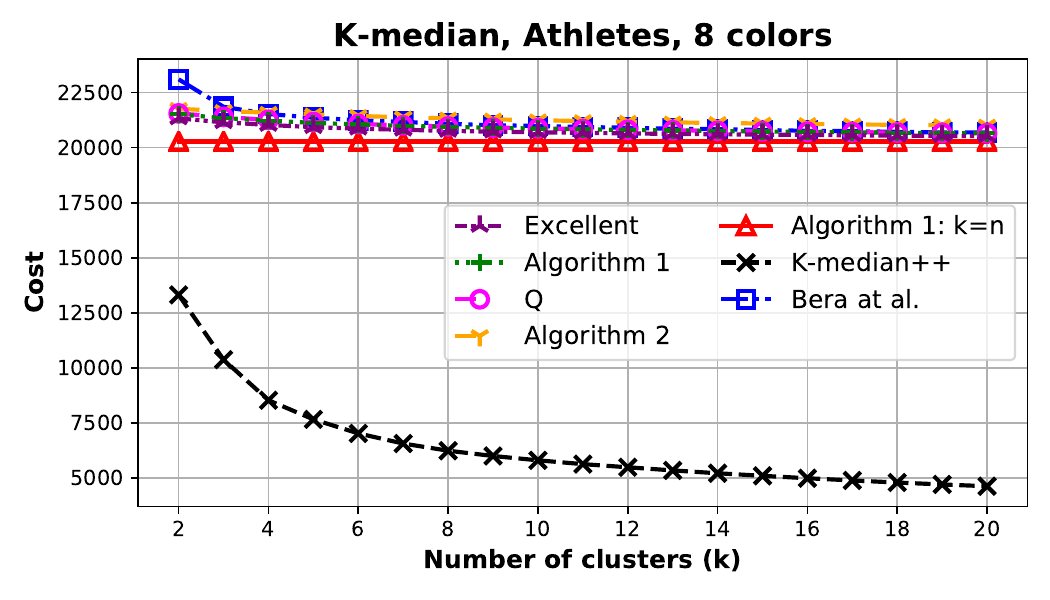}}
\subfloat[]{\includegraphics[width=.35\textwidth]{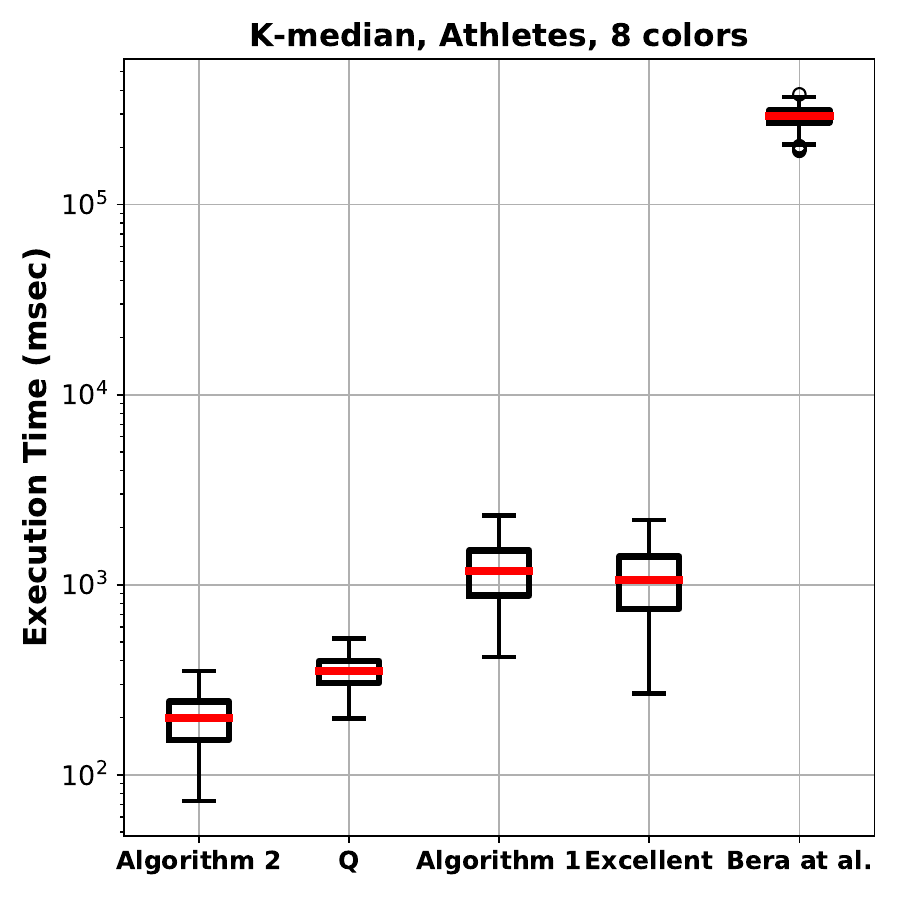}}
\caption{Average cost and execution time of the fair-k-median methods on \textit{Athletes} dataset: 8 colors, 100 subsamples of 1000 distinct points each.}
\label{fig:exp-1-k-median-athletes-8-colors}
\end{figure*}
\begin{figure*}[h!]
\subfloat[]{\includegraphics[width=.6125\textwidth]{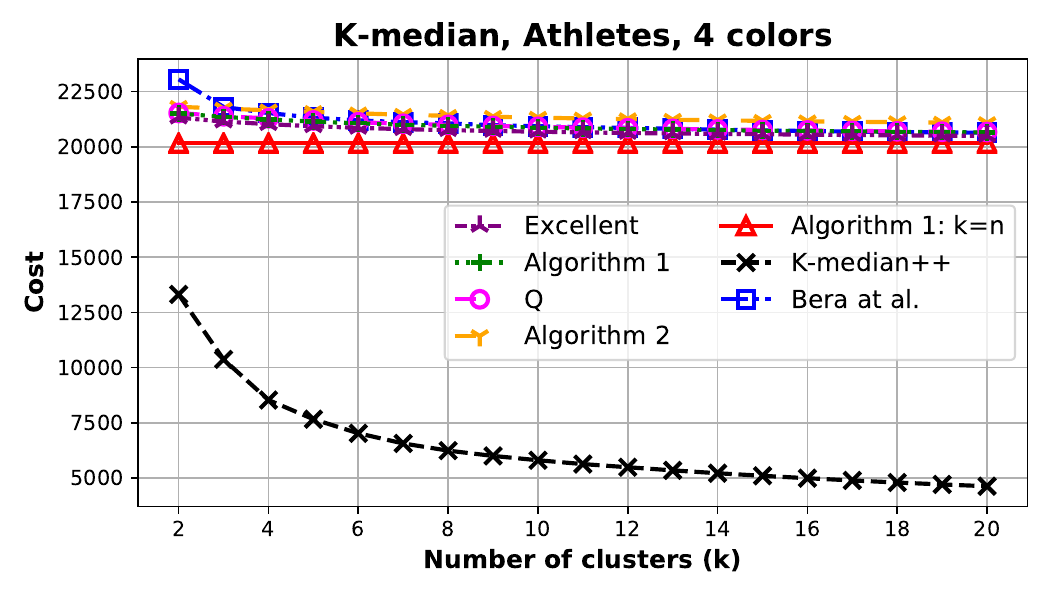}}
\subfloat[]{\includegraphics[width=.35\textwidth]{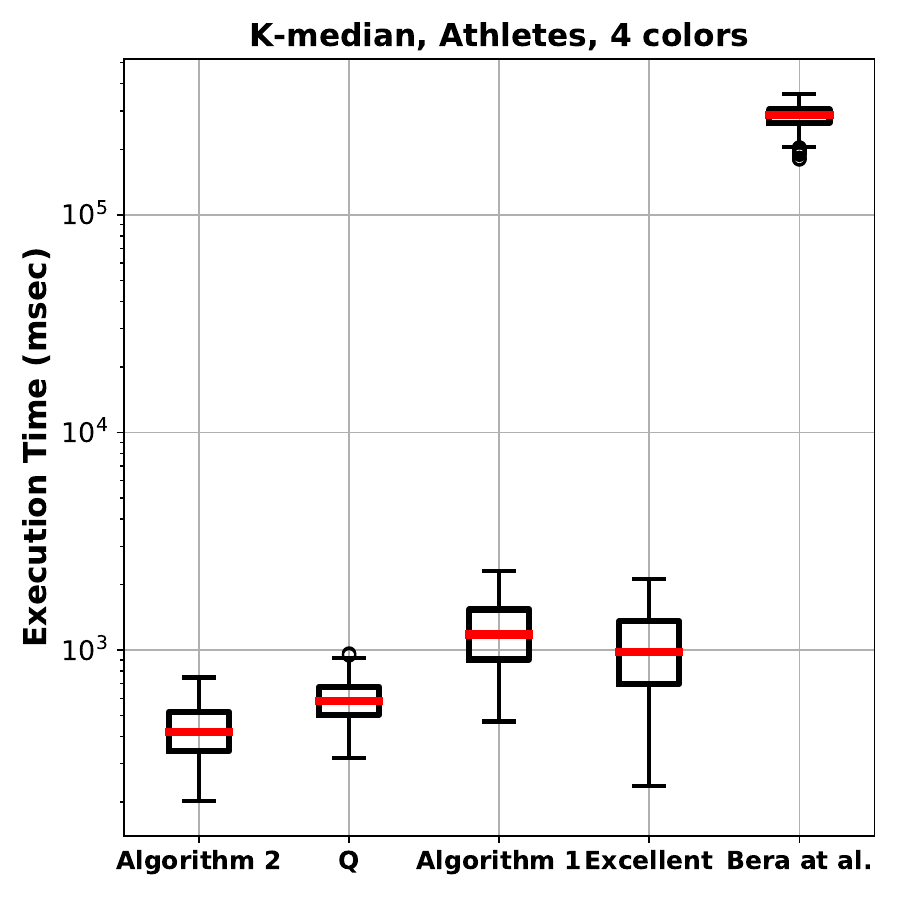}}
\caption{Average cost and execution time of the fair-k-median methods on \textit{Athletes} dataset: 4 colors, 100 subsamples of 1000 distinct points each.}
\label{fig:exp-1-k-median-athletes-4-colors}
\end{figure*}

\begin{table*}[h!]
\begin{center}
\begin{tabular}{|c|c|c|c|}
\hline

            & $k \in [2, 5]$ & $k \in [6, 10]$ & $k \in [11, 20]$ \\ \hline
 {\bf Algorithm~\ref{alg:reduction}: k=n}  & 20278.4 \textit{280.9} & 20278.4 \textit{280.9} & 20278.4 \textit{280.9} \\ \hline
 {\bf k-median++}  & 9970.3 \textit{2171.1} & 6327.1 \textit{439.9} & 5079.1 \textit{327.3} \\ \hline
 {\bf Excellent}  & 21107.9 \textit{300.0} & 20769.6 \textit{276.7} & 20585.8 \textit{273.5} \\ \hline
 {\bf Algorithm~\ref{alg:reduction}}  & 21299.8 \textit{297.6} & 20948.8 \textit{276.4} & 20728.2 \textit{275.6} \\ \hline
 {\bf Q}  & 21340.9 \textit{302.0} & 20965.2 \textit{278.8} & 20736.5 \textit{276.4} \\ \hline
 {\bf Algorithm~\ref{alg:randmedian}}  & 21629.2 \textit{391.5} & 21333.2 \textit{418.7} & 21092.3 \textit{411.6} \\ \hline
 {\bf ~\cite{BCN19}}  & 21959.2 \textit{796.3} & 21098.5 \textit{295.8} & 20793.4 \textit{288.7} \\ \hline

\end{tabular}
\caption{Average and standard deviation of the cost of the fair-k-median methods on \textit{Athletes} dataset: 8 colors, 100 subsamples of 1000 distinct points each.}
\label{tab:athletes-error-8}
\end{center}
\end{table*}

\begin{table*}[h!]
\begin{center}
\begin{tabular}{|c|c|c|c|}
\hline

            & $k \in [2, 5]$ & $k \in [6, 10]$ & $k \in [11, 20]$ \\ \hline
 {\bf Algorithm~\ref{alg:reduction}: k=n}  & 20157.3 \textit{279.6} & 20157.3 \textit{279.6} & 20157.3 \textit{279.6} \\ \hline
 {\bf k-median++}  & 9970.3 \textit{2171.1} & 6327.1 \textit{439.9} & 5079.1 \textit{327.3} \\ \hline
 {\bf Excellent}  & 21105.5 \textit{302.2} & 20763.7 \textit{275.0} & 20564.7 \textit{275.7} \\ \hline
 {\bf Algorithm~\ref{alg:reduction}}  & 21311.2 \textit{298.7} & 20959.8 \textit{278.5} & 20731.3 \textit{277.7} \\ \hline
 {\bf Q}  & 21349.7 \textit{299.4} & 20999.8 \textit{289.8} & 20758.7 \textit{285.3} \\ \hline
 {\bf Algorithm~\ref{alg:randmedian}}  & 21694.3 \textit{384.7} & 21410.3 \textit{416.6} & 21171.7 \textit{405.5} \\ \hline
 {\bf ~\cite{BCN19}}  & 21927.4 \textit{790.9} & 21058.0 \textit{300.0} & 20744.4 \textit{288.3} \\ \hline

\end{tabular}
\caption{Average and standard deviation of the cost of the fair-k-median methods on \textit{Athletes} dataset: 4 colors, 100 subsamples of 1000 distinct points each.}
\label{tab:athletes-error-4}
\end{center}
\end{table*}

\begin{figure*}[h!]
\subfloat[]{\includegraphics[width=.6125\textwidth]{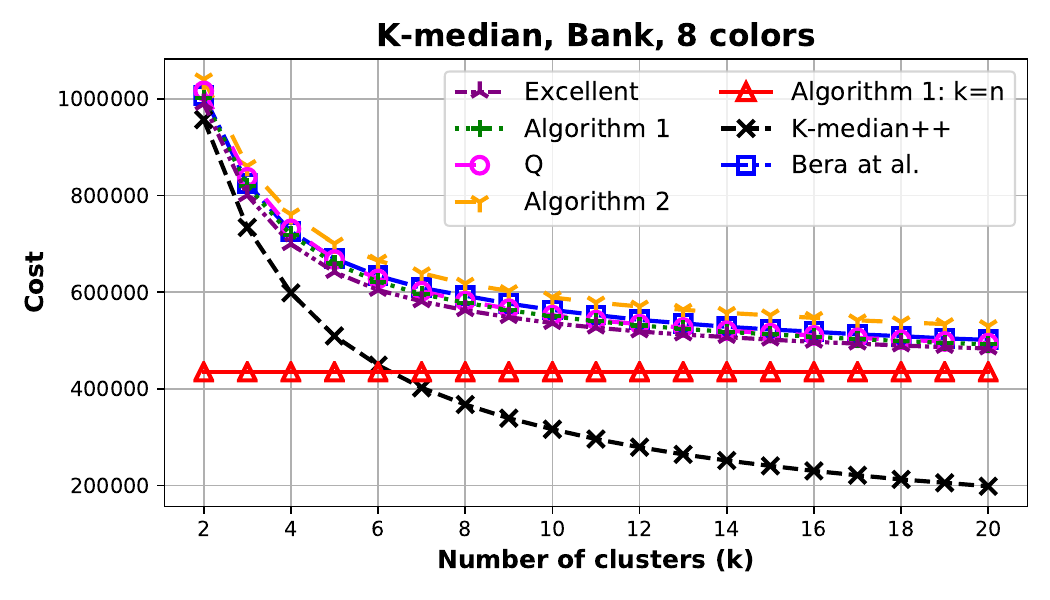}}
\subfloat[]{\includegraphics[width=.35\textwidth]{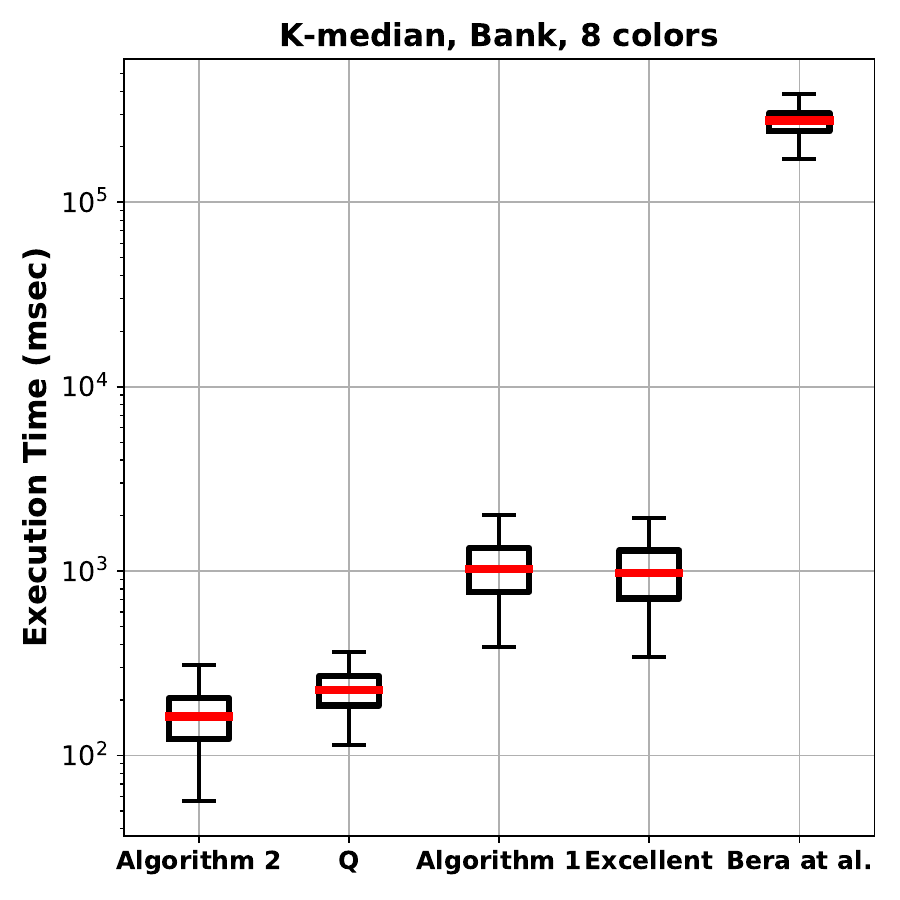}}
\caption{Average cost and execution time of the fair-k-median methods on \textit{Bank} dataset: 8 colors, 100 subsamples of 1000 distinct points each.}
\label{fig:exp-1-k-median-bank-8-colors}
\end{figure*}
\begin{figure*}[h!]
\subfloat[]{\includegraphics[width=.6125\textwidth]{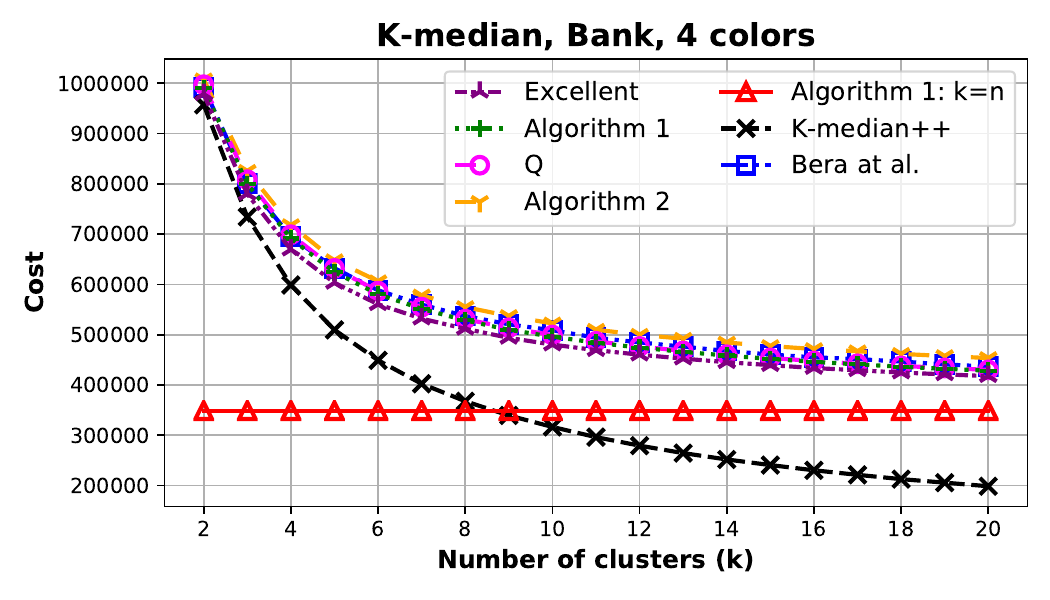}}
\subfloat[]{\includegraphics[width=.35\textwidth]{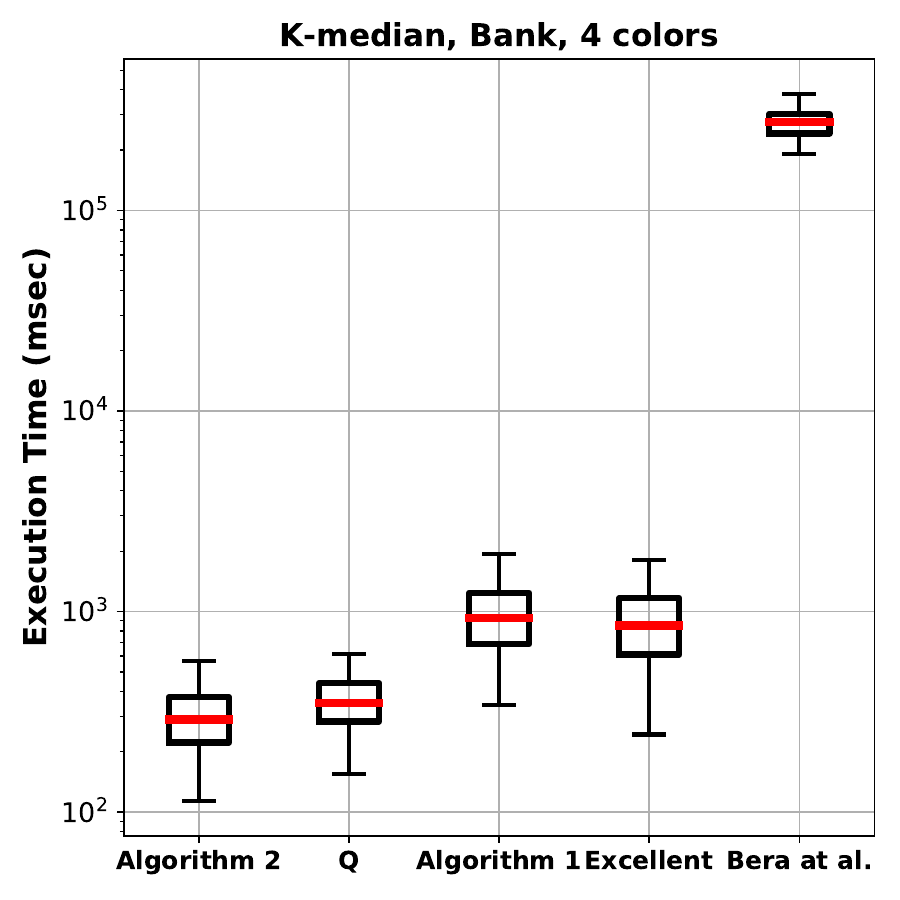}}
\caption{Average cost and execution time of the fair-k-median methods on \textit{Bank} dataset: 4 colors, 100 subsamples of 1000 distinct points each.}
\label{fig:exp-1-k-median-bank-4-colors}
\end{figure*}

\begin{table*}[h!]
\begin{center}
\begin{tabular}{|c|c|c|c|}
\hline

            & $k \in [2, 5]$ & $k \in [6, 10]$ & $k \in [11, 20]$ \\ \hline
 {\bf Algorithm~\ref{alg:reduction}: k=n}  & 434892.4 \textit{63810.1} & 434892.4 \textit{63810.1} & 434892.4 \textit{63810.1} \\ \hline
 {\bf k-median++}  & 699773.5 \textit{178243.2} & 375005.8 \textit{53842.9} & 240197.8 \textit{33496.6} \\ \hline
 {\bf Excellent}  & 782666.7 \textit{148502.7} & 566505.9 \textit{66761.8} & 501728.7 \textit{63811.2} \\ \hline
 {\bf Algorithm~\ref{alg:reduction}}  & 799852.1 \textit{147306.8} & 581807.0 \textit{68002.9} & 512378.1 \textit{64506.9} \\ \hline
 {\bf Q}  & 813091.6 \textit{151845.3} & 585861.4 \textit{68185.5} & 514900.4 \textit{64694.5} \\ \hline
 {\bf Algorithm~\ref{alg:randmedian}}  & 840799.3 \textit{158718.3} & 622778.7 \textit{89740.2} & 551295.3 \textit{86282.3} \\ \hline
 {\bf ~\cite{BCN19}}  & 806863.1 \textit{146858.0} & 595020.8 \textit{70544.2} & 522864.6 \textit{65312.6} \\ \hline

\end{tabular}
\caption{Average and standard deviation of the cost of the fair-k-median methods on \textit{Bank} dataset: 8 colors, 100 subsamples of 1000 distinct points each.}
\label{tab:bank-error-8}
\end{center}
\end{table*}

\begin{table*}[h!]
\begin{center}
\begin{tabular}{|c|c|c|c|}
\hline

            & $k \in [2, 5]$ & $k \in [6, 10]$ & $k \in [11, 20]$ \\ \hline
 {\bf Algorithm~\ref{alg:reduction}: k=n}  & 348806.2 \textit{68568.3} & 348806.2 \textit{68568.3} & 348806.2 \textit{68568.3} \\ \hline
 {\bf k-median++}  & 699773.5 \textit{178243.2} & 375005.8 \textit{53842.9} & 240197.8 \textit{33496.6} \\ \hline
 {\bf Excellent}  & 757508.6 \textit{157020.1} & 515590.9 \textit{67641.0} & 439276.2 \textit{65541.3} \\ \hline
 {\bf Algorithm~\ref{alg:reduction}}  & 776267.5 \textit{154638.7} & 533173.4 \textit{69012.5} & 451817.3 \textit{65885.7} \\ \hline
 {\bf Q}  & 783853.3 \textit{154205.1} & 536940.7 \textit{68463.6} & 453614.1 \textit{65842.6} \\ \hline
 {\bf Algorithm~\ref{alg:randmedian}}  & 799761.9 \textit{154970.7} & 559700.2 \textit{73057.8} & 477443.6 \textit{72098.2} \\ \hline
 {\bf ~\cite{BCN19}}  & 779977.8 \textit{153222.8} & 542883.4 \textit{72216.1} & 461341.9 \textit{66716.7} \\ \hline

\end{tabular}
\caption{Average and standard deviation of the cost of the fair-k-median methods on \textit{Bank} dataset: 4 colors, 100 subsamples of 1000 distinct points each.}
\label{tab:bank-error-4}
\end{center}
\end{table*}

\begin{figure*}[h!]
\subfloat[]{\includegraphics[width=.6125\textwidth]{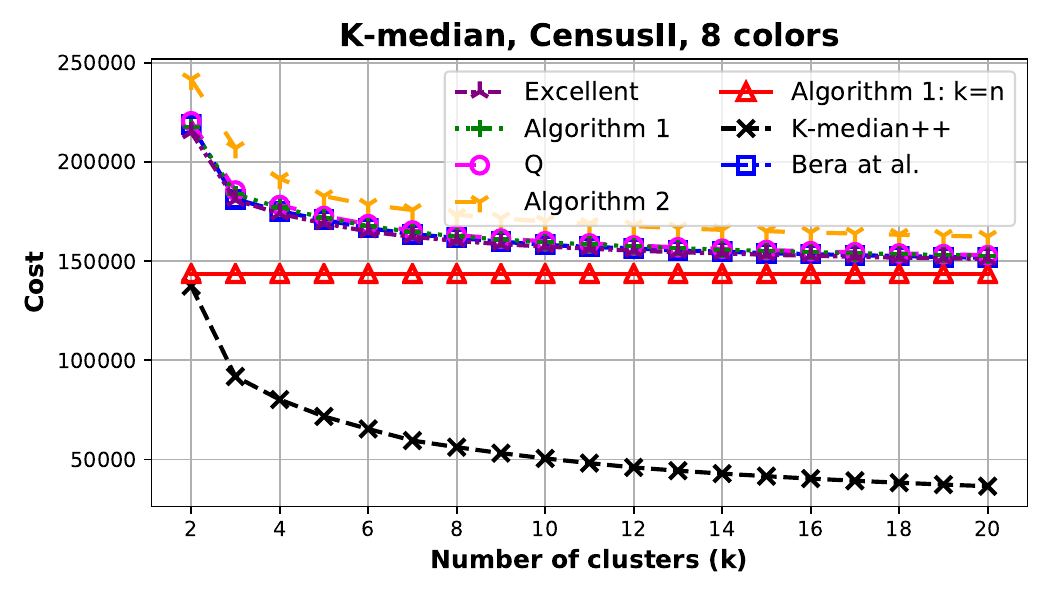}}
\subfloat[]{\includegraphics[width=.35\textwidth]{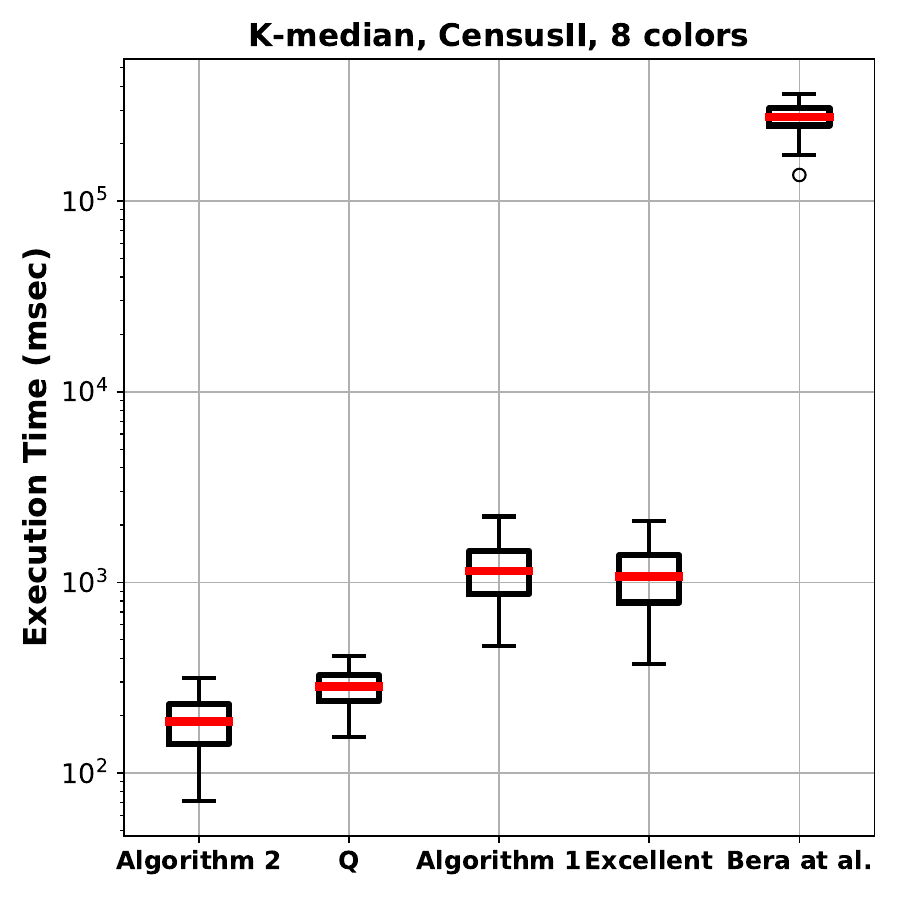}}
\caption{Average cost and execution time of the fair-k-median methods on \textit{CensusII} dataset: 8 colors, 100 subsamples of 1000 distinct points each.}
\label{fig:exp-1-k-median-censusII-8-colors}
\end{figure*}
\begin{figure*}[h!]
\subfloat[]{\includegraphics[width=.6125\textwidth]{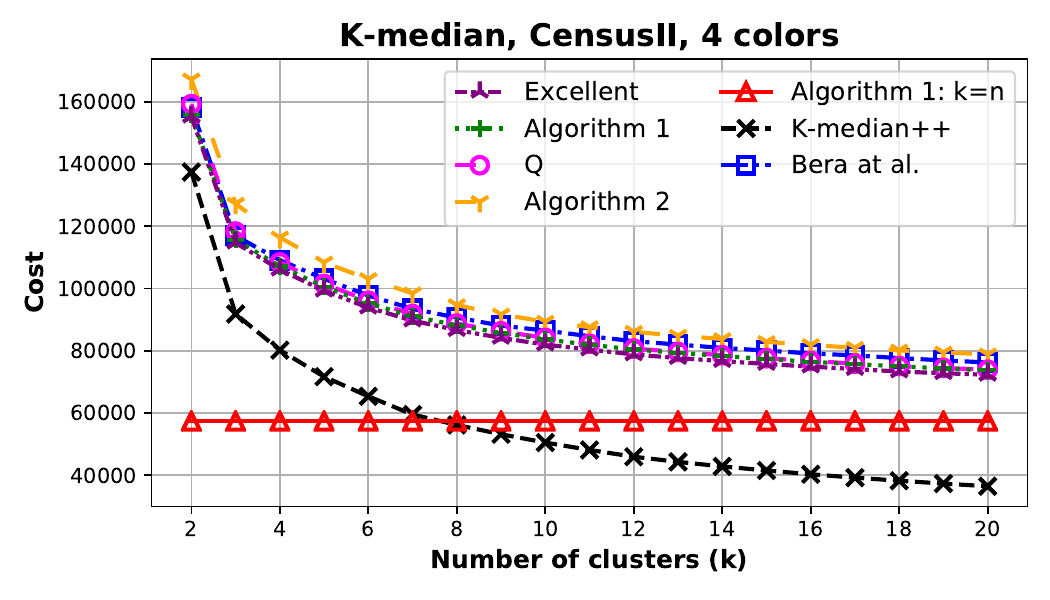}}
\subfloat[]{\includegraphics[width=.35\textwidth]{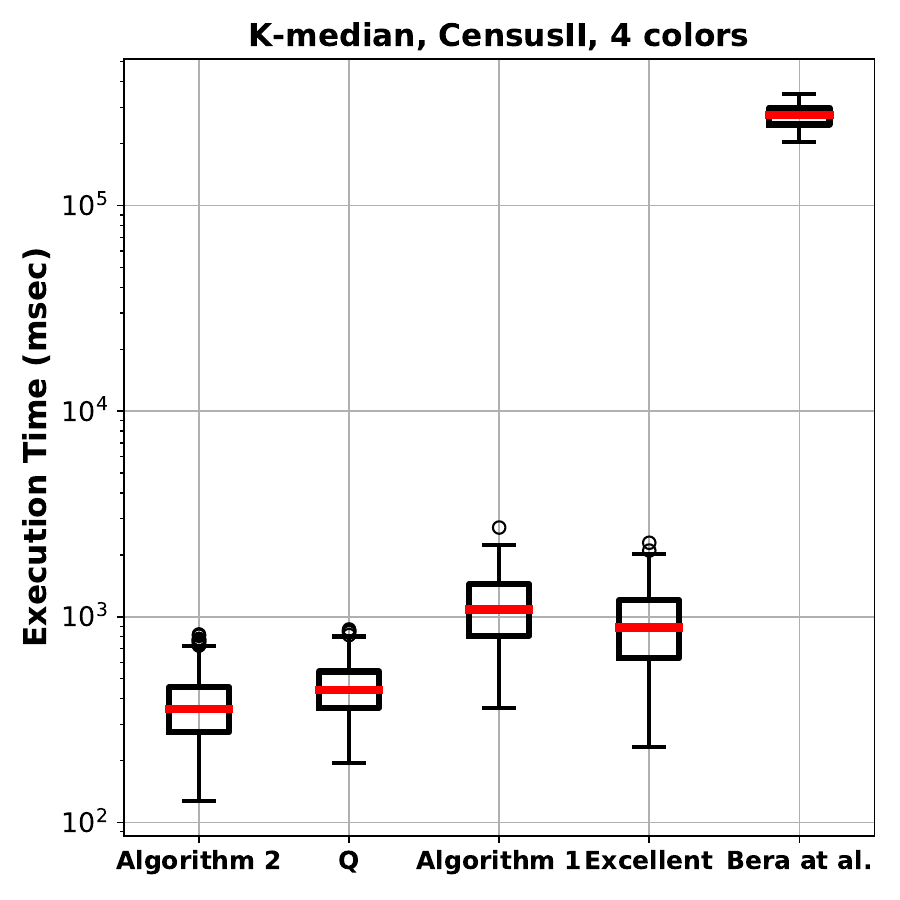}}
\caption{Average cost and execution time of the fair-k-median methods on \textit{CensusII} dataset: 4 colors, 100 subsamples of 1000 distinct points each.}
\label{fig:exp-1-k-median-censusII-4-colors}
\end{figure*}

\begin{table*}[h!]
\begin{center}
\begin{tabular}{|c|c|c|c|}
\hline

            & $k \in [2, 5]$ & $k \in [6, 10]$ & $k \in [11, 20]$ \\ \hline
 {\bf Algorithm~\ref{alg:reduction}: k=n}  & 143647.8 \textit{11160.9} & 143647.8 \textit{11160.9} & 143647.8 \textit{11160.9} \\ \hline
 {\bf k-median++}  & 95258.4 \textit{25778.1} & 56917.8 \textit{5417.5} & 41405.2 \textit{3832.5} \\ \hline
 {\bf Excellent}  & 184695.5 \textit{20692.4} & 160658.8 \textit{10799.8} & 153238.9 \textit{10856.3} \\ \hline
 {\bf Algorithm~\ref{alg:reduction}}  & 187589.5 \textit{20660.7} & 163074.4 \textit{10881.9} & 155139.9 \textit{10832.0} \\ \hline
 {\bf Q}  & 189288.9 \textit{21659.0} & 163719.4 \textit{11099.0} & 155544.5 \textit{10952.7} \\ \hline
 {\bf Algorithm~\ref{alg:randmedian}}  & 205788.6 \textit{33060.7} & 173844.1 \textit{14035.4} & 165075.1 \textit{13902.7} \\ \hline
 {\bf ~\cite{BCN19}}  & 186506.1 \textit{21502.2} & 162025.1 \textit{10575.3} & 154067.1 \textit{10496.4} \\ \hline

\end{tabular}
\caption{Average and standard deviation of the cost of the fair-k-median methods on \textit{CensusII} dataset: 8 colors, 100 subsamples of 1000 distinct points each.}
\label{tab:census-error-8}
\end{center}
\end{table*}

\begin{table*}[h!]
\begin{center}
\begin{tabular}{|c|c|c|c|}
\hline

            & $k \in [2, 5]$ & $k \in [6, 10]$ & $k \in [11, 20]$ \\ \hline
 {\bf Algorithm~\ref{alg:reduction}: k=n}  & 57322.8 \textit{8683.0} & 57322.8 \textit{8683.0} & 57322.8 \textit{8683.0} \\ \hline
 {\bf k-median++}  & 95258.4 \textit{25778.1} & 56917.8 \textit{5417.5} & 41405.2 \textit{3832.5} \\ \hline
 {\bf Excellent}  & 118812.9 \textit{22916.8} & 87243.8 \textit{8554.2} & 75646.0 \textit{8366.8} \\ \hline
 {\bf Algorithm~\ref{alg:reduction}}  & 119795.5 \textit{22706.1} & 88760.4 \textit{8645.0} & 77237.1 \textit{8464.1} \\ \hline
 {\bf Q}  & 121837.5 \textit{23860.1} & 89385.4 \textit{8779.7} & 77455.6 \textit{8511.2} \\ \hline
 {\bf Algorithm~\ref{alg:randmedian}}  & 129809.6 \textit{26354.5} & 95423.7 \textit{11718.4} & 82617.0 \textit{10875.8} \\ \hline
 {\bf ~\cite{BCN19}}  & 121847.6 \textit{22859.6} & 91267.5 \textit{8347.0} & 79891.8 \textit{8120.9} \\ \hline

\end{tabular}
\caption{Average and standard deviation of the cost of the fair-k-median methods on \textit{CensusII} dataset: 4 colors, 100 subsamples of 1000 distinct points each.}
\label{tab:census-error-4}
\end{center}
\end{table*}

\begin{figure*}[h!]
\subfloat[]{\includegraphics[width=.6125\textwidth]{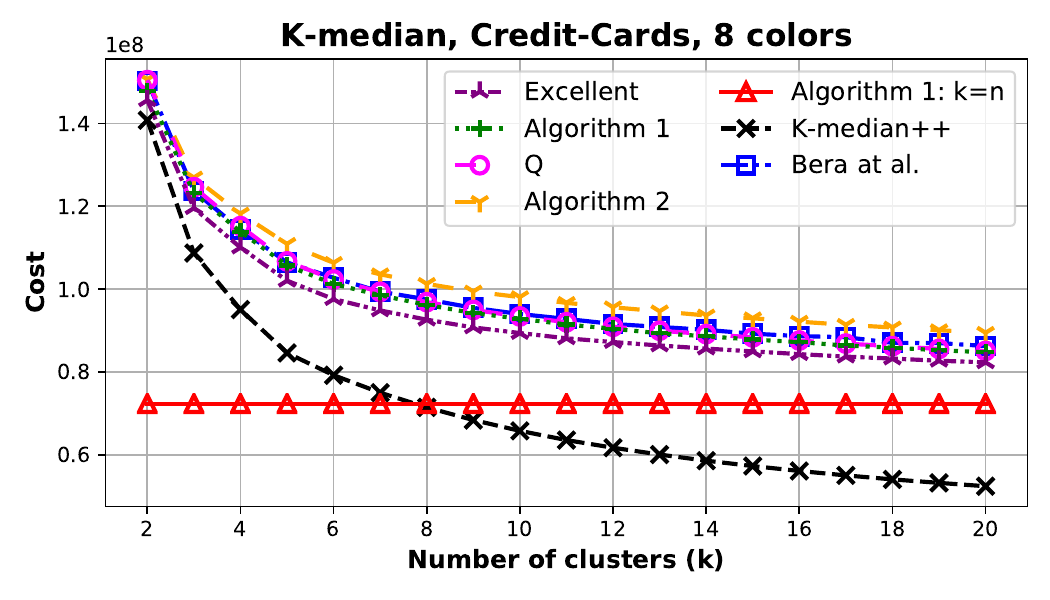}}
\subfloat[]{\includegraphics[width=.35\textwidth]{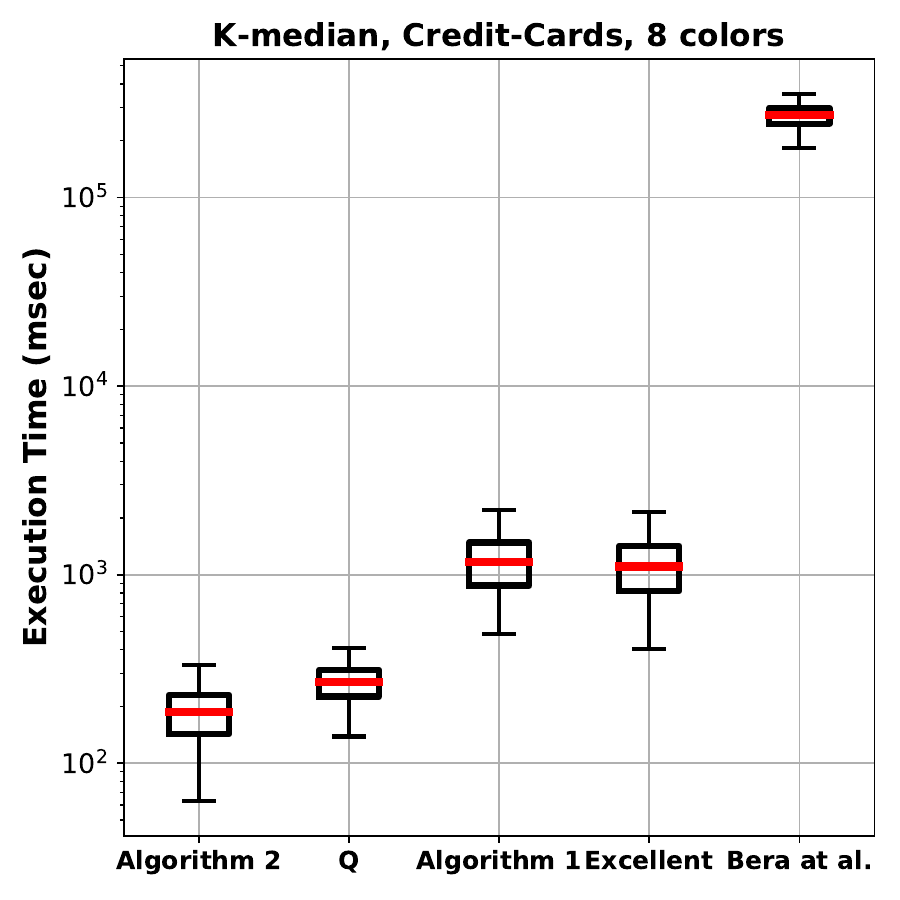}}
\caption{Average cost and execution time of the fair-k-median methods on \textit{Credit cards} dataset: 8 colors, 100 subsamples of 1000 distinct points each.}
\label{fig:exp-1-k-median-credit-8-colors}
\end{figure*}
\begin{figure*}[h!]
\subfloat[]{\includegraphics[width=.6125\textwidth]{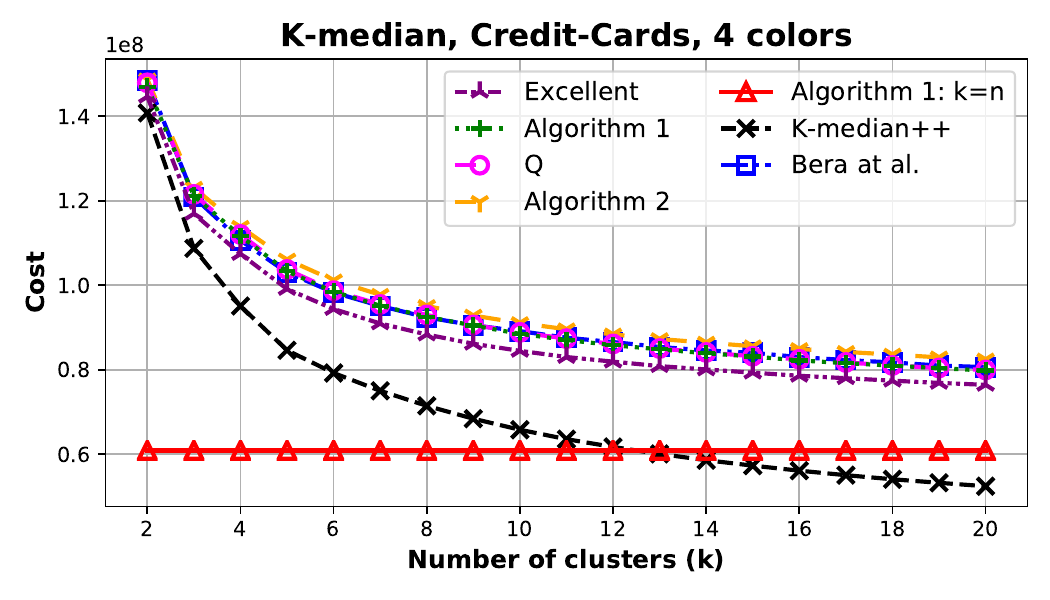}}
\subfloat[]{\includegraphics[width=.35\textwidth]{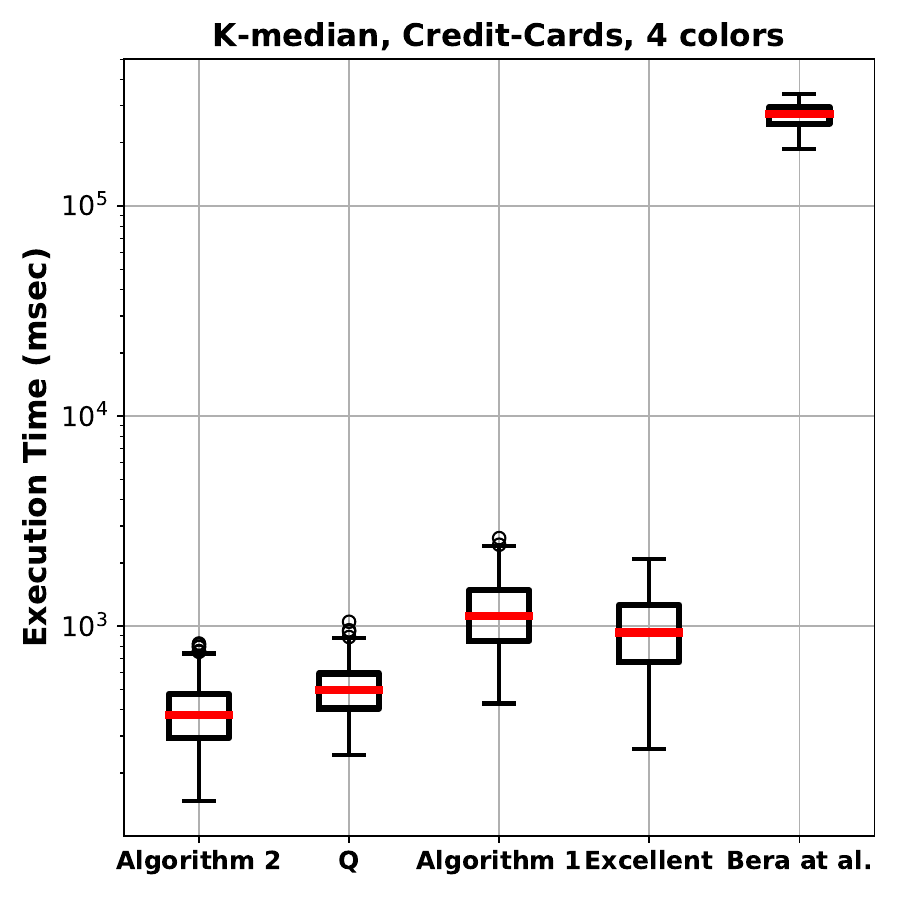}}
\caption{Average cost and execution time of the fair-k-median methods on \textit{Credit cards} dataset: 4 colors, 100 subsamples of 1000 distinct points each.}
\label{fig:exp-1-k-median-credit-4-colors}
\end{figure*}

\begin{table*}[h!]
\begin{center}
\begin{tabular}{|c|c|c|c|}
\hline

            & $k \in [2, 5]$ & $k \in [6, 10]$ & $k \in [11, 20]$ \\ \hline
 {\bf Algorithm~\ref{alg:reduction}: k=n}  & 72339739.1 \textit{3121117.5} & 72339739.1 \textit{3121117.5} & 72339739.1 \textit{3121117.5} \\ \hline
 {\bf k-median++}  & 107302571.2 \textit{21435714.7} & 71975930.7 \textit{5299320.0} & 57207897.4 \textit{4066524.0} \\ \hline
 {\bf Excellent}  & 119312415.8 \textit{16746700.6} & 93053938.9 \textit{4189509.1} & 84871284.9 \textit{3599454.6} \\ \hline
 {\bf Algorithm~\ref{alg:reduction}}  & 122659466.9 \textit{16198184.0} & 96614545.3 \textit{4341943.3} & 87722007.7 \textit{3818697.4} \\ \hline
 {\bf Q}  & 124251499.5 \textit{16883999.8} & 97391340.5 \textit{4602652.1} & 88173991.3 \textit{3940251.9} \\ \hline
 {\bf Algorithm~\ref{alg:randmedian}}  & 126706176.1 \textit{16066799.1} & 101750820.7 \textit{7338227.3} & 92743080.9 \textit{7061486.0} \\ \hline
 {\bf ~\cite{BCN19}}  & 123705922.5 \textit{17036610.1} & 97835628.9 \textit{4640655.8} & 89211162.0 \textit{3889060.7} \\ \hline

\end{tabular}
\caption{Average and standard deviation of the cost of the fair-k-median methods on \textit{Credit cards} dataset: 8 colors, 100 subsamples of 1000 distinct points each.}
\label{tab:cred-error-8}
\end{center}
\end{table*}

\begin{table*}[h!]
\begin{center}
\begin{tabular}{|c|c|c|c|}
\hline

            & $k \in [2, 5]$ & $k \in [6, 10]$ & $k \in [11, 20]$ \\ \hline
 {\bf Algorithm~\ref{alg:reduction}: k=n}  & 60896014.3 \textit{3371582.0} & 60896014.3 \textit{3371582.0} & 60896014.3 \textit{3371582.0} \\ \hline
 {\bf k-median++}  & 107302571.2 \textit{21435714.7} & 71975930.7 \textit{5299320.0} & 57207897.4 \textit{4066524.0} \\ \hline
 {\bf Excellent}  & 117011650.1 \textit{17505918.1} & 88851550.1 \textit{4828231.4} & 79239484.5 \textit{3885547.9} \\ \hline
 {\bf Algorithm~\ref{alg:reduction}}  & 120774309.1 \textit{16788025.7} & 93023954.1 \textit{4925886.7} & 82960786.4 \textit{4133732.4} \\ \hline
 {\bf Q}  & 121293644.5 \textit{17044979.9} & 93318777.1 \textit{4971420.4} & 83129910.2 \textit{4191054.8} \\ \hline
 {\bf Algorithm~\ref{alg:randmedian}}  & 122939880.2 \textit{16637229.6} & 95576130.1 \textit{5432173.1} & 85525305.9 \textit{4762009.5} \\ \hline
 {\bf ~\cite{BCN19}}  & 120760156.9 \textit{17782849.1} & 93085262.4 \textit{4758334.7} & 83669285.3 \textit{4118306.9} \\ \hline

\end{tabular}
\caption{Average and standard deviation of the cost of the fair-k-median methods on \textit{Credit cards} dataset: 4 colors, 100 subsamples of 1000 distinct points each.}
\label{tab:cred-error-4}
\end{center}
\end{table*}

\begin{figure*}[h!]
\subfloat[]{\includegraphics[width=.6125\textwidth]{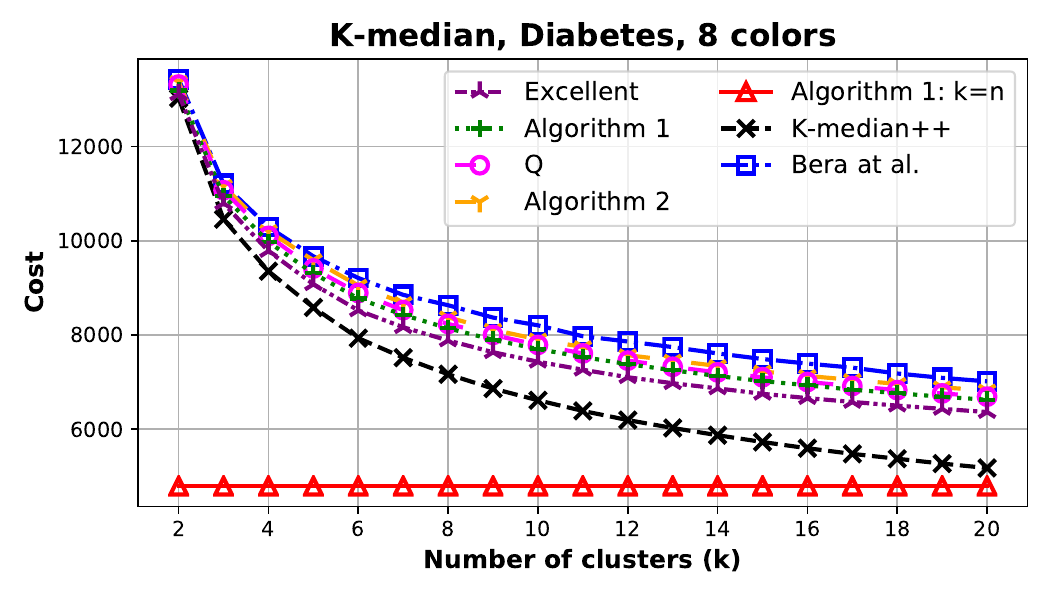}}
\subfloat[]{\includegraphics[width=.35\textwidth]{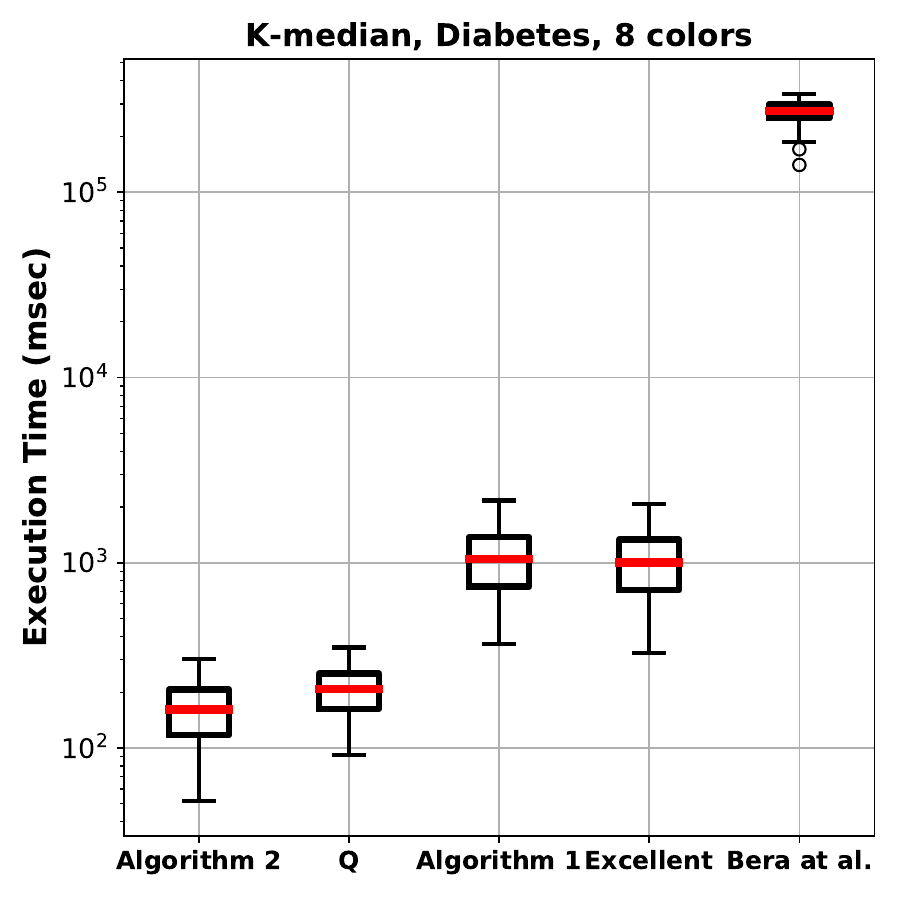}}
\caption{Average cost and execution time of the fair-k-median methods on \textit{Diabetes} dataset: 8 colors, 100 subsamples of 1000 distinct points each.}
\label{fig:exp-1-k-median-diabetes-8-colors}
\end{figure*}
\begin{figure*}[h!]
\subfloat[]{\includegraphics[width=.6125\textwidth]{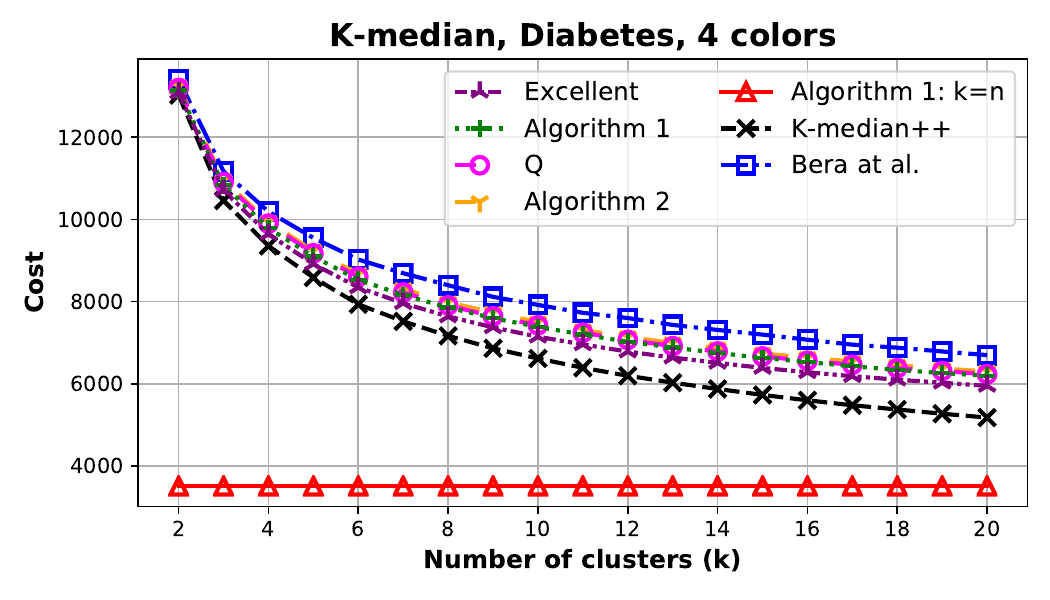}}
\subfloat[]{\includegraphics[width=.35\textwidth]{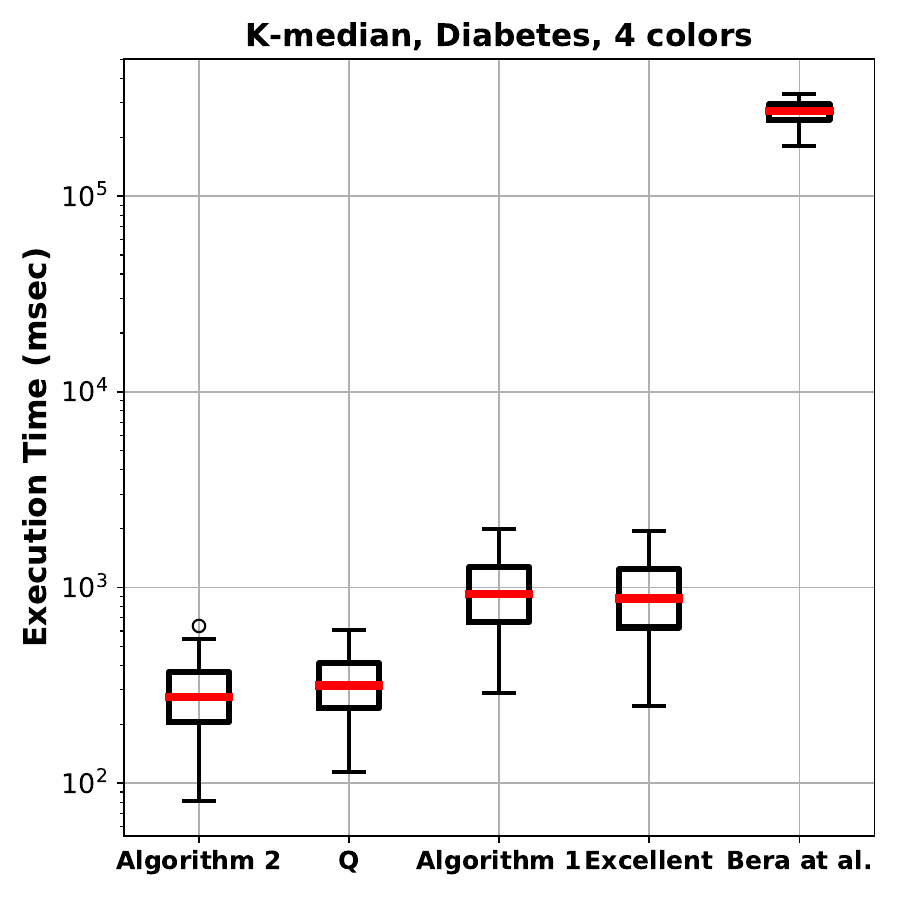}}
\caption{Average cost and execution time of the fair-k-median methods on \textit{Diabetes} dataset: 4 colors, 100 subsamples of 1000 distinct points each.}
\label{fig:exp-1-k-median-diabetes-4-colors}
\end{figure*}

\begin{table*}[h!]
\begin{center}
\begin{tabular}{|c|c|c|c|}
\hline

            & $k \in [2, 5]$ & $k \in [6, 10]$ & $k \in [11, 20]$ \\ \hline
 {\bf Algorithm~\ref{alg:reduction}: k=n}  & 4789.9 \textit{229.5} & 4789.9 \textit{229.5} & 4789.9 \textit{229.5} \\ \hline
 {\bf k-median++}  & 10354.9 \textit{1691.9} & 7218.8 \textit{485.1} & 5710.4 \textit{396.4} \\ \hline
 {\bf Excellent}  & 10689.6 \textit{1541.6} & 7927.3 \textit{431.1} & 6749.5 \textit{343.6} \\ \hline
 {\bf Algorithm~\ref{alg:reduction}}  & 10855.6 \textit{1488.8} & 8188.5 \textit{435.3} & 7016.6 \textit{354.3} \\ \hline
 {\bf Q}  & 10975.4 \textit{1499.6} & 8288.9 \textit{449.5} & 7089.5 \textit{367.7} \\ \hline
 {\bf Algorithm~\ref{alg:randmedian}}  & 11082.9 \textit{1468.3} & 8423.7 \textit{490.0} & 7219.0 \textit{394.0} \\ \hline
 {\bf ~\cite{BCN19}}  & 11148.2 \textit{1454.4} & 8653.1 \textit{444.2} & 7466.4 \textit{392.0} \\ \hline

\end{tabular}
\caption{Average and standard deviation of the cost of the fair-k-median methods on \textit{Diabetes} dataset: 8 colors, 100 subsamples of 1000 distinct points each.}
\label{tab:diab-error-8}
\end{center}
\end{table*}

\begin{table*}[h!]
\begin{center}
\begin{tabular}{|c|c|c|c|}
\hline

            & $k \in [2, 5]$ & $k \in [6, 10]$ & $k \in [11, 20]$ \\ \hline
 {\bf Algorithm~\ref{alg:reduction}: k=n}  & 3505.7 \textit{222.2} & 3505.7 \textit{222.2} & 3505.7 \textit{222.2} \\ \hline
 {\bf k-median++}  & 10354.9 \textit{1691.9} & 7218.8 \textit{485.1} & 5710.4 \textit{396.4} \\ \hline
 {\bf Excellent}  & 10583.8 \textit{1590.7} & 7691.6 \textit{458.9} & 6380.3 \textit{364.3} \\ \hline
 {\bf Algorithm~\ref{alg:reduction}}  & 10713.0 \textit{1546.6} & 7905.0 \textit{446.0} & 6622.5 \textit{368.3} \\ \hline
 {\bf Q}  & 10793.2 \textit{1537.9} & 7958.1 \textit{461.2} & 6661.0 \textit{371.4} \\ \hline
 {\bf Algorithm~\ref{alg:randmedian}}  & 10842.1 \textit{1536.2} & 8039.4 \textit{470.0} & 6732.3 \textit{390.5} \\ \hline
 {\bf ~\cite{BCN19}}  & 11078.4 \textit{1490.2} & 8428.6 \textit{463.1} & 7164.5 \textit{392.5} \\ \hline

\end{tabular}
\caption{Average and standard deviation of the cost of the fair-k-median methods on \textit{Diabetes} dataset: 4 colors, 100 subsamples of 1000 distinct points each.}
\label{tab:diab-error-4}
\end{center}
\end{table*}

\begin{figure*}[h!]
\subfloat[]{\includegraphics[width=.6125\textwidth]{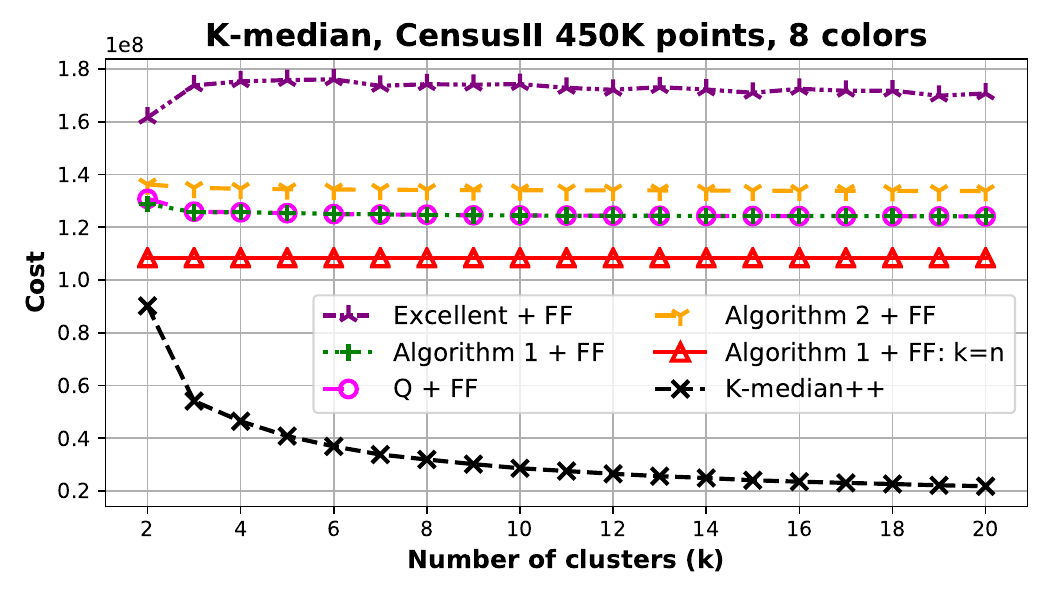}}
\subfloat[]{\includegraphics[width=.35\textwidth]{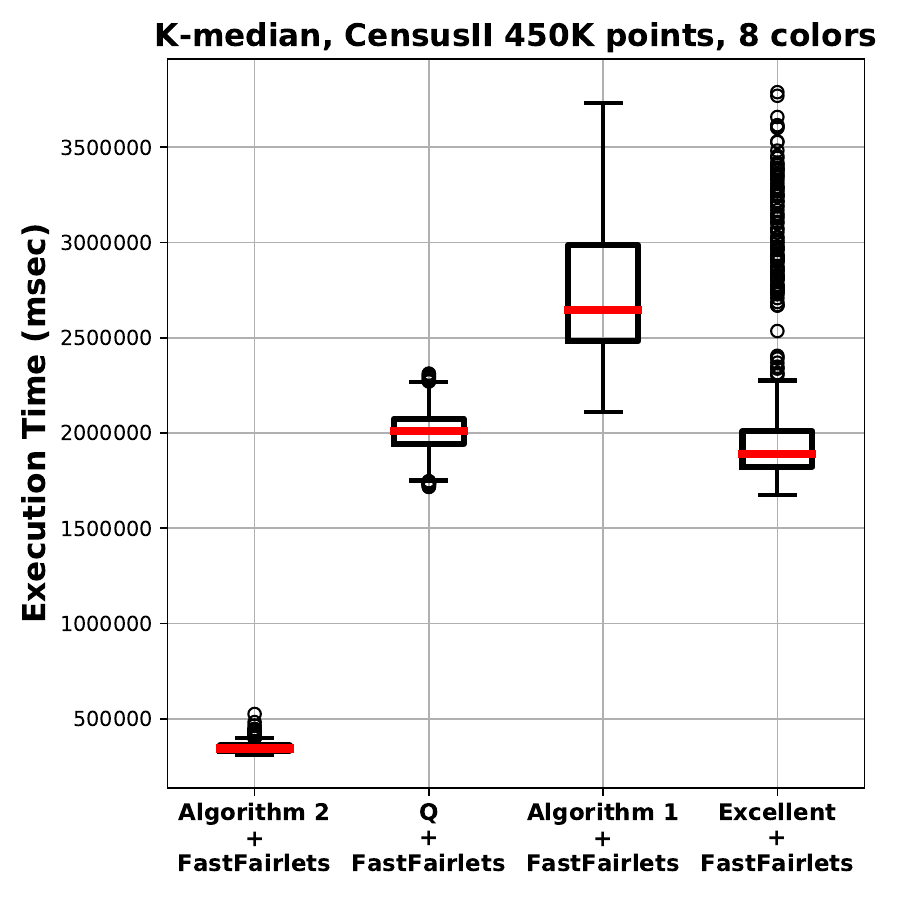}}
\caption{Average cost and execution time of the fair-k-median methods combined with fast fairlets decomposition~\cite{BIOSVW19} on the \textit{CensusII} dataset: 8 colors, 100 subsamples of 450000 distinct points each.}
\label{fig:exp-2-k-median-censusII-8-colors}
\end{figure*}

\begin{table*}[h!]
\begin{center}
\begin{tabular}{|c|c|c|c|}
\hline

            & $k \in [2, 5]$ & $k \in [6, 10]$ & $k \in [11, 20]$ \\ \hline
 {\bf Algorithm~\ref{alg:reduction} + FF: k=n}  & 108369855.9 \textit{3054639.9} & 108369855.9 \textit{3054639.9} & 108369855.9 \textit{3054639.9} \\ \hline
 {\bf k-median++}  & 57863762.6 \textit{19309933.8} & 32275620.7 \textit{3110506.0} & 24170463.4 \textit{1923303.3} \\ \hline
 {\bf Excellent + FF}  & 171581102.9 \textit{8929810.2} & 174418606.9 \textit{6230287.9} & 171759787.4 \textit{5951334.0} \\ \hline
 {\bf Algorithm~\ref{alg:reduction} + FF}  & 126373453.8 \textit{3582880.6} & 124704090.1 \textit{3391359.4} & 124237402.6 \textit{3429534.5} \\ \hline
 {\bf Q + FF}  & 126885878.7 \textit{4115222.4} & 124681188.1 \textit{3454367.0} & 124220881.3 \textit{3424855.2} \\ \hline
 {\bf Algorithm~\ref{alg:randmedian} + FF}  & 135025164.1 \textit{7359895.1} & 134129796.4 \textit{7632188.8} & 133834734.3 \textit{7742148.8} \\ \hline

\end{tabular}
\caption{Average cost and execution time of the fair-k-median methods combined with fast fairlets decomposition~\cite{BIOSVW19} on the \textit{CensusII} dataset: 8 colors, 100 subsamples of 450000 distinct points each.}
\label{fig:exp-2-k-median-censusII}
\end{center}
\end{table*}

For the larger data set 450000-points \textit{CensusII}, we used the fast fairlet decomposition by~\cite{BIOSVW19} to ensure scalability. Unfortunately, the implementation by~\cite{BCN19} could not benefit from this preprocessing step and the implementation itself was not able to process data sets at this scale.
Cost and running time are given in Figure~\ref{fig:exp-2-k-median-censusII-8-colors}.
Running times are similar to those of the small data sets. The most notable difference is that computing an approximate fair assignment after optimization as done by Algorithm {\bf Excellent} negatively affects the approximation.

\section{Conclusion and Future Work}
In this paper, we studied the fair clustering problem in which we are given $n$ points from $\ell$ distinct protected groups and wish to cluster these points such that every group is equally represented in each cluster.
We have presented a generic reduction from fair clustering with multiple (i.e. $\ell\geq 3$) protected classes to to unconstrained clustering at which retains the approximation factor up to constant factors. This result holds for any center-based $k$-clustering objective, including $k$-median, $k$-means, and $k$-center. Moreover, our reduction is robust to approximation and can be easily combined with methods designed to make clustering in general and fair clustering in particular more scalable~\cite{BIOSVW19,HuangJV19,SSS19}.

A number of problems are left in this work. The most challenging one is to show whether there the constant factor loss in the approximation is necessary or not. In other words, does there exist a result showing that fair clustering is strictly harder than unconstrained clustering, for any objective?
Since this question is rather general and might be hard to answer, we propose a few simpler problems. 
First, we have showed that in general metrics, fair $n$-center is APX-hard if the number of colors is greater than $3$. Does this result also hold for the Euclidean plane? Moreover, what can we say about computing a fair $n$-median?
We also showed that a PTAS for fair clustering exists, provided that $k$ is constant. $k$-median and $k$-means in constant dimension admit a PTAS. A natural question is whether such a PTAS also exists for the fair variants of the problem. This problem is open, even in the case of two protected attributes.

Lastly, the balancing constraint we considered in this paper assumes that all protected classes are disjoint and of equal cardinality. If either of these assumptions do not hold, the bi-criteria results by~\cite{BCN19} and~\cite{BGKKRSS19} are still the state of the art. Therefore a further open question is: Under which circumstances is a constant factor approximation to these generalizations obtainable?

\newpage
\bibliographystyle{plain}
\bibliography{references}

\begin{thebibliography}{10}

\bibitem{AckermannBS08}
Marcel~R. Ackermann, Johannes Bl{\"{o}}mer, and Christian Sohler.
\newblock Clustering for metric and non-metric distance measures.
\newblock In {\em Proceedings of the Nineteenth Annual {ACM-SIAM} Symposium on
  Discrete Algorithms, {SODA} 2008, San Francisco, California, USA, January
  20-22, 2008}, pages 799--808, 2008.

\bibitem{AdamczykBMM019}
Marek Adamczyk, Jaroslaw Byrka, Jan Marcinkowski, Syed~M. Meesum, and Michal
  Wlodarczyk.
\newblock Constant-factor {FPT} approximation for capacitated k-median.
\newblock In {\em 27th Annual European Symposium on Algorithms, {ESA} 2019,
  September 9-11, 2019, Munich/Garching, Germany.}, pages 1:1--1:14, 2019.

\bibitem{AgarwalDW19}
Alekh Agarwal, Miroslav Dud{\'{\i}}k, and Zhiwei~Steven Wu.
\newblock Fair regression: Quantitative definitions and reduction-based
  algorithms.
\newblock In {\em Proceedings of the 36th International Conference on Machine
  Learning, {ICML} 2019, 9-15 June 2019, Long Beach, California, {USA}}, pages
  120--129, 2019.

\bibitem{APFTKKZ10}
Gagan Aggarwal, Rina Panigrahy, Tom{\'{a}}s Feder, Dilys Thomas, Krishnaram
  Kenthapadi, Samir Khuller, and An~Zhu.
\newblock Achieving anonymity via clustering.
\newblock {\em {ACM} Trans. Algorithms}, 6(3):49:1--49:19, 2010.

\bibitem{AnBCGMS15}
Hyung{-}Chan An, Aditya Bhaskara, Chandra Chekuri, Shalmoli Gupta, Vivek Madan,
  and Ola Svensson.
\newblock Centrality of trees for capacitated k-center.
\newblock {\em Math. Program.}, 154(1-2):29--53, 2015.

\bibitem{AV07}
David Arthur and Sergei Vassilvitskii.
\newblock k-means++: the advantages of careful seeding.
\newblock In {\em Proceedings of the 18th {ACM-SIAM} Symposium on Discrete
  Algorithms {(SODA)}}, pages 1027 -- 1035, 2007.

\bibitem{AryaGKMMP04}
Vijay Arya, Naveen Garg, Rohit Khandekar, Adam Meyerson, Kamesh Munagala, and
  Vinayaka Pandit.
\newblock Local search heuristics for k-median and facility location problems.
\newblock {\em {SIAM} J. Comput.}, 33(3):544--562, 2004.

\bibitem{BIOSVW19}
Arturs Backurs, Piotr Indyk, Krzysztof Onak, Baruch Schieber, Ali Vakilian, and
  Tal Wagner.
\newblock Scalable fair clustering.
\newblock In {\em Proceedings of the 36th International Conference on Machine
  Learning, {ICML} 2019, 9-15 June 2019, Long Beach, California, {USA}}, pages
  405--413, 2019.

\bibitem{BadoiuHI02}
Mihai Badoiu, Sariel Har{-}Peled, and Piotr Indyk.
\newblock Approximate clustering via core-sets.
\newblock In {\em Proceedings on 34th Annual {ACM} Symposium on Theory of
  Computing, May 19-21, 2002, Montr{\'{e}}al, Qu{\'{e}}bec, Canada}, pages
  250--257, 2002.

\bibitem{BBCGS19}
Luca Becchetti, Marc Bury, Vincent Cohen{-}Addad, Fabrizio Grandoni, and Chris
  Schwiegelshohn.
\newblock Oblivious dimension reduction for \emph{k}-means: beyond subspaces
  and the johnson-lindenstrauss lemma.
\newblock In {\em Proceedings of the 51st Annual {ACM} {SIGACT} Symposium on
  Theory of Computing, {STOC} 2019, Phoenix, AZ, USA, June 23-26, 2019.}, pages
  1039--1050, 2019.

\bibitem{BCN19}
Suman~Kalyan Bera, Deeparnab Chakrabarty, Nicolas Flores, and Maryam Negahbani.
\newblock Fair algorithms for clustering.
\newblock In {\em Advances in Neural Information Processing Systems 32: Annual
  Conference on Neural Information Processing Systems 2019, NeurIPS 2019, 8-14
  December 2019, Vancouver, BC, Canada}, pages 4955--4966, 2019.

\bibitem{BGKKRSS19}
Ioana~Oriana Bercea, Martin Gro{\ss}, Samir Khuller, Aounon Kumar, Clemens
  R{\"{o}}sner, Daniel~R. Schmidt, and Melanie Schmidt.
\newblock On the cost of essentially fair clusterings.
\newblock In {\em Approximation, Randomization, and Combinatorial Optimization.
  Algorithms and Techniques, {APPROX/RANDOM} 2019, September 20-22, 2019,
  Massachusetts Institute of Technology, Cambridge, MA, {USA.}}, pages
  18:1--18:22, 2019.

\bibitem{BhattacharyaJK18}
Anup Bhattacharya, Ragesh Jaiswal, and Amit Kumar.
\newblock Faster algorithms for the constrained k-means problem.
\newblock {\em Theory Comput. Syst.}, 62(1):93--115, 2018.

\bibitem{BoseH19}
Avishek~Joey Bose and William~L. Hamilton.
\newblock Compositional fairness constraints for graph embeddings.
\newblock In {\em Proceedings of the 36th International Conference on Machine
  Learning, {ICML} 2019, 9-15 June 2019, Long Beach, California, {USA}}, pages
  715--724, 2019.

\bibitem{CHV18}
L.~Elisa Celis, Lingxiao Huang, and Nisheeth~K. Vishnoi.
\newblock Multiwinner voting with fairness constraints.
\newblock In {\em Proceedings of the Twenty-Seventh International Joint
  Conference on Artificial Intelligence, {IJCAI} 2018, July 13-19, 2018,
  Stockholm, Sweden.}, pages 144--151, 2018.

\bibitem{CSV18}
L.~Elisa Celis, Damian Straszak, and Nisheeth~K. Vishnoi.
\newblock Ranking with fairness constraints.
\newblock In {\em 45th International Colloquium on Automata, Languages, and
  Programming, {ICALP} 2018, July 9-13, 2018, Prague, Czech Republic}, pages
  28:1--28:15, 2018.

\bibitem{ChenFLM19}
Xingyu Chen, Brandon Fain, Liang Lyu, and Kamesh Munagala.
\newblock Proportionally fair clustering.
\newblock In {\em Proceedings of the 36th International Conference on Machine
  Learning, {ICML} 2019, 9-15 June 2019, Long Beach, California, {USA}}, pages
  1032--1041, 2019.

\bibitem{CKLV17}
Flavio Chierichetti, Ravi Kumar, Silvio Lattanzi, and Sergei Vassilvitskii.
\newblock Fair clustering through fairlets.
\newblock In {\em Proceedings of the 30th Annual Conference on Neural
  Information Processing Systems (NIPS)}, pages 5036--5044, 2017.

\bibitem{ChlebikC03}
Miroslav Chleb{\'{\i}}k and Janka Chleb{\'{\i}}kov{\'{a}}.
\newblock Inapproximability results for bounded variants of optimization
  problems.
\newblock In {\em Fundamentals of Computation Theory, 14th International
  Symposium, {FCT} 2003, Malm{\"{o}}, Sweden, August 12-15, 2003, Proceedings},
  pages 27--38, 2003.

\bibitem{Chouldechova17}
Alexandra Chouldechova.
\newblock Fair prediction with disparate impact: {A} study of bias in
  recidivism prediction instruments.
\newblock {\em Big Data}, 5(2):153--163, 2017.

\bibitem{CyganHK12}
Marek Cygan, MohammadTaghi Hajiaghayi, and Samir Khuller.
\newblock {LP} rounding for k-centers with non-uniform hard capacities.
\newblock In {\em 53rd Annual {IEEE} Symposium on Foundations of Computer
  Science, {FOCS} 2012, New Brunswick, NJ, USA, October 20-23, 2012}, pages
  273--282, 2012.

\bibitem{DemirciL16}
H.~G{\"{o}}kalp Demirci and Shi Li.
\newblock Constant approximation for capacitated k-median with
  (1+epsilon)-capacity violation.
\newblock In {\em 43rd International Colloquium on Automata, Languages, and
  Programming, {ICALP} 2016, July 11-15, 2016, Rome, Italy}, pages 73:1--73:14,
  2016.

\bibitem{DingL18}
Hu~Ding and Manni Liu.
\newblock On geometric prototype and applications.
\newblock In {\em 26th Annual European Symposium on Algorithms, {ESA} 2018,
  August 20-22, 2018, Helsinki, Finland}, pages 23:1--23:15, 2018.

\bibitem{DingX15}
Hu~Ding and Jinhui Xu.
\newblock A unified framework for clustering constrained data without locality
  property.
\newblock In {\em Proceedings of the Twenty-Sixth Annual {ACM-SIAM} Symposium
  on Discrete Algorithms, {SODA} 2015, San Diego, CA, USA, January 4-6, 2015},
  pages 1471--1490, 2015.

\bibitem{FSS13}
Dan Feldman, Melanie Schmidt, and Christian Sohler.
\newblock Turning big data into tiny data: Constant-size coresets for
  \emph{k}-means, {PCA} and projective clustering.
\newblock In {\em Proceedings of the Twenty-Fourth Annual {ACM-SIAM} Symposium
  on Discrete Algorithms, {SODA} 2013, New Orleans, Louisiana, USA, January
  6-8, 2013}, pages 1434--1453, 2013.

\bibitem{FFMSV15}
Michael Feldman, Sorelle~A. Friedler, John Moeller, Carlos Scheidegger, and
  Suresh Venkatasubramanian.
\newblock Certifying and removing disparate impact.
\newblock In {\em Proceedings of the 21th {ACM} {SIGKDD} International
  Conference on Knowledge Discovery and Data Mining}, pages 259--268, 2015.

\bibitem{Gonzalez85}
Teofilo~F. Gonzalez.
\newblock Clustering to minimize the maximum intercluster distance.
\newblock {\em Theor. Comput. Sci.}, 38:293--306, 1985.

\bibitem{GordalizaBGL19}
Paula Gordaliza, Eustasio del Barrio, Fabrice Gamboa, and Jean{-}Michel Loubes.
\newblock Obtaining fairness using optimal transport theory.
\newblock In {\em Proceedings of the 36th International Conference on Machine
  Learning, {ICML} 2019, 9-15 June 2019, Long Beach, California, {USA}}, pages
  2357--2365, 2019.

\bibitem{HPNS16}
Moritz Hardt, Eric Price, and Nati Srebro.
\newblock Equality of opportunity in supervised learning.
\newblock In {\em Advances in Neural Information Processing Systems 29: Annual
  Conference on Neural Information Processing Systems 2016, December 5-10,
  2016, Barcelona, Spain}, pages 3315--3323, 2016.

\bibitem{HuangJV19}
Lingxiao Huang, Shaofeng~H.{-}C. Jiang, and Nisheeth~K. Vishnoi.
\newblock Coresets for clustering with fairness constraints.
\newblock In Hanna~M. Wallach, Hugo Larochelle, Alina Beygelzimer, Florence
  d'Alch{\'{e}}{-}Buc, Emily~B. Fox, and Roman Garnett, editors, {\em Advances
  in Neural Information Processing Systems 32: Annual Conference on Neural
  Information Processing Systems 2019, NeurIPS 2019, 8-14 December 2019,
  Vancouver, BC, Canada}, pages 7587--7598, 2019.

\bibitem{KleinbergMR17}
Jon~M. Kleinberg, Sendhil Mullainathan, and Manish Raghavan.
\newblock Inherent trade-offs in the fair determination of risk scores.
\newblock In {\em 8th Innovations in Theoretical Computer Science Conference,
  {ITCS} 2017, January 9-11, 2017, Berkeley, CA, {USA}}, pages 43:1--43:23,
  2017.

\bibitem{KAM19}
Matth{\"{a}}us Kleindessner, Pranjal Awasthi, and Jamie Morgenstern.
\newblock Fair k-center clustering for data summarization.
\newblock In {\em Proceedings of the 36th International Conference on Machine
  Learning, {ICML} 2019, 9-15 June 2019, Long Beach, California, {USA}}, pages
  3448--3457, 2019.

\bibitem{KleindessnerSAM19}
Matth{\"{a}}us Kleindessner, Samira Samadi, Pranjal Awasthi, and Jamie
  Morgenstern.
\newblock Guarantees for spectral clustering with fairness constraints.
\newblock In {\em Proceedings of the 36th International Conference on Machine
  Learning, {ICML} 2019, 9-15 June 2019, Long Beach, California, {USA}}, pages
  3458--3467, 2019.

\bibitem{Kuhn55}
H.~W. Kuhn and Bryn Yaw.
\newblock The hungarian method for the assignment problem.
\newblock {\em Naval Res. Logist. Quart}, pages 83--97, 1955.

\bibitem{LiYZ10}
Jian Li, Ke~Yi, and Qin Zhang.
\newblock Clustering with diversity.
\newblock In {\em Automata, Languages and Programming, 37th International
  Colloquium, {ICALP} 2010, Bordeaux, France, July 6-10, 2010, Proceedings,
  Part {I}}, pages 188--200, 2010.

\bibitem{Li17}
Shi Li.
\newblock On uniform capacitated \emph{k}-median beyond the natural {LP}
  relaxation.
\newblock {\em {ACM} Trans. Algorithms}, 13(2):22:1--22:18, 2017.

\bibitem{MakarychevMR19}
Konstantin Makarychev, Yury Makarychev, and Ilya~P. Razenshteyn.
\newblock Performance of johnson-lindenstrauss transform for \emph{k}-means and
  \emph{k}-medians clustering.
\newblock In {\em Proceedings of the 51st Annual {ACM} {SIGACT} Symposium on
  Theory of Computing, {STOC} 2019, Phoenix, AZ, USA, June 23-26, 2019.}, pages
  1027--1038, 2019.

\bibitem{Mat96}
Jiří Matoušek.
\newblock On the distortion required for embedding finite metric spaces into
  normed spaces.
\newblock {\em Israel Journal of Mathematics}, 93:333--344, 01 1996.

\bibitem{MSSTV19}
Jamie Morgenstern, Samira Samadi, Mohit Singh, Uthaipon~Tao Tantipongpipat, and
  Santosh Vempala.
\newblock Fair dimensionality reduction and iterative rounding for sdps.
\newblock {\em CoRR}, abs/1902.11281, 2019.

\bibitem{NBGP18}
Alejandro Noriega{-}Campero, Michiel Bakker, Bernardo Garcia{-}Bulle, and Alex
  Pentland.
\newblock Active fairness in algorithmic decision making.
\newblock {\em CoRR}, abs/1810.00031, 2018.

\bibitem{RS18}
Clemens R{\"{o}}sner and Melanie Schmidt.
\newblock Privacy preserving clustering with constraints.
\newblock In {\em 45th International Colloquium on Automata, Languages, and
  Programming, {ICALP} 2018, July 9-13, 2018, Prague, Czech Republic}, pages
  96:1--96:14, 2018.

\bibitem{RubnerTG00}
Yossi Rubner, Carlo Tomasi, and Leonidas~J. Guibas.
\newblock The earth mover's distance as a metric for image retrieval.
\newblock {\em International Journal of Computer Vision}, 40(2):99--121, 2000.

\bibitem{SamadiTMSV18}
Samira Samadi, Uthaipon~Tao Tantipongpipat, Jamie~H. Morgenstern, Mohit Singh,
  and Santosh Vempala.
\newblock The price of fair {PCA:} one extra dimension.
\newblock In {\em Advances in Neural Information Processing Systems 31: Annual
  Conference on Neural Information Processing Systems 2018, NeurIPS 2018, 3-8
  December 2018, Montr{\'{e}}al, Canada.}, pages 10999--11010, 2018.

\bibitem{SSS19}
Melanie Schmidt, Chris Schwiegelshohn, and Christian Sohler.
\newblock Fair coresets and streaming algorithms for fair k-means.
\newblock In {\em Approximation and Online Algorithms - 17th International
  Workshop, {WAOA} 2019, Munich, Germany, September 12-13, 2019, Revised
  Selected Papers}, pages 232--251, 2019.

\bibitem{TantipongpipatS19}
Uthaipon Tantipongpipat, Samira Samadi, Mohit Singh, Jamie~H. Morgenstern, and
  Santosh~S. Vempala.
\newblock Multi-criteria dimensionality reduction with applications to
  fairness.
\newblock In Hanna~M. Wallach, Hugo Larochelle, Alina Beygelzimer, Florence
  d'Alch{\'{e}}{-}Buc, Emily~B. Fox, and Roman Garnett, editors, {\em Advances
  in Neural Information Processing Systems 32: Annual Conference on Neural
  Information Processing Systems 2019, NeurIPS 2019, 8-14 December 2019,
  Vancouver, BC, Canada}, pages 15135--15145, 2019.

\bibitem{TRT11}
Binh~Luong Thanh, Salvatore Ruggieri, and Franco Turini.
\newblock k-nn as an implementation of situation testing for discrimination
  discovery and prevention.
\newblock In {\em Proceedings of the 17th {ACM} {SIGKDD} International
  Conference on Knowledge Discovery and Data Mining}, pages 502--510, 2011.

\bibitem{ZVG17}
Muhammad~Bilal Zafar, Isabel Valera, Manuel Gomez~Rodriguez, and Krishna~P.
  Gummadi.
\newblock Fairness beyond disparate treatment \& disparate impact: Learning
  classification without disparate mistreatment.
\newblock In {\em Proceedings of the 26th International Conference on World
  Wide Web (WWW)}, pages 1171--1180, 2017.

\bibitem{ZVGG17}
Muhammad~Bilal Zafar, Isabel Valera, Manuel Gomez{-}Rodriguez, and Krishna~P.
  Gummadi.
\newblock Fairness constraints: Mechanisms for fair classification.
\newblock In {\em Proceedings of the 20th International Conference on
  Artificial Intelligence and Statistics, {AISTATS} 2017, 20-22 April 2017,
  Fort Lauderdale, FL, {USA}}, pages 962--970, 2017.

\bibitem{ZVGGW17c}
Muhammad~Bilal Zafar, Isabel Valera, Manuel Gomez{-}Rodriguez, Krishna~P.
  Gummadi, and Adrian Weller.
\newblock From parity to preference-based notions of fairness in
  classification.
\newblock In {\em Advances in Neural Information Processing Systems 30: Annual
  Conference on Neural Information Processing Systems 2017, 4-9 December 2017,
  Long Beach, CA, {USA}}, pages 228--238, 2017.

\end{thebibliography}

\end{document}